\newtheorem{theorem}{Theorem}
\newtheorem{definition}{Definition}
\newtheorem{lemma}{Lemma}
\definecolor{bgmint}{gray}{0.9}
\begin{document}
\parindent 0mm

\title[MCBench: A Benchmark Suite for Monte Carlo Sampling Algorithms]{MCBench: A Benchmark Suite for Monte Carlo Sampling Algorithms}

\author[1,3]{\fnm{Zeyu} \sur{Ding}}\email{zeyu.ding@tu-dortmund.de}
\equalcont{These authors contributed equally to this work.}
\author[2]{\fnm{Cornelius} \sur{Grunwald}}\email{cornelius.grunwald@tu-dortmund.de}
\equalcont{These authors contributed equally to this work.}
\author[1,3,4]{\fnm{Katja} \sur{Ickstadt}}\email{ickstadt@statistik.tu-dortmund.de}
\equalcont{These authors contributed equally to this work.}
\author[2,4]{\fnm{Kevin} \sur{Kröninger}}\email{kevin.kroeninger@cern.ch}
\equalcont{These authors contributed equally to this work.}
\author[2]{\fnm{Salvatore} \sur{La Cagnina}}\email{salvatore.lacagnina@tu-dortmund.de}
\equalcont{These authors contributed equally to this work.}

\affil[1]{\orgdiv{Department of Statistics}, \orgname{TU Dortmund University}, \orgaddress{\street{Vogelpothsweg 87}, \city{Dortmund}, \postcode{44227}, \country{Germany}}}
\affil[2]{\orgdiv{Department of Physics}, \orgname{TU Dortmund University}, \orgaddress{\street{Otto-Hahn-Straße 4}, \city{Dortmund}, \postcode{44227}, \country{Germany}}}
\affil[3]{
\orgname{Lamarr-Institute for Machine Learning and Artificial Intelligence},
\affil[4]{\orgdiv{TU Dortmund - Center for
Data Science and Simulation}, \orgname{TU Dortmund University}, \orgaddress{\street{August-Schmidt-Straße 4}, \city{Dortmund}, \postcode{44227}, \country{Germany}}}


\abstract{
\noindent \parindent 0mm
In this paper, we present MCBench, a benchmark suite designed to assess the quality of Monte Carlo (MC) samples. 
The benchmark suite enables quantitative comparisons of samples by applying different metrics, including basic statistical metrics as well as more complex measures, in particular the sliced Wasserstein distance and the maximum mean discrepancy. 
We apply these metrics to point clouds of both independent and identically distributed (IID) samples and correlated samples generated by MC techniques, such as Markov Chain Monte Carlo or Nested Sampling. Through repeated comparisons, we evaluate test statistics of the metrics, allowing to evaluate the quality of the MC sampling algorithms.

Our benchmark suite offers a variety of target functions with different complexities and dimensionalities, providing a versatile platform for testing the capabilities of sampling algorithms.
Implemented as a Julia package, MCBench enables users to easily select test cases and metrics from the provided collections, which can be extended as needed.
Users can run external sampling algorithms of their choice on these test functions and input the resulting samples to obtain detailed metrics that quantify the quality of their samples compared to the IID samples generated by our package. 
This approach yields clear, quantitative measures of sampling quality and allows for informed decisions about the effectiveness of different sampling methods.

By offering such a standardized method for evaluating MC sampling quality, our benchmark suite provides researchers and practitioners from many scientific fields, such as the natural sciences, engineering, or the social sciences with a valuable tool for developing, validating and refining sampling algorithms.}


\maketitle

\section{Introduction}
A large number of scientific fields use numerical methods to address mathematical problems which, in most cases, cannot be solved analytically. A common variant is the Monte Carlo (MC) method. It is based on drawing random numbers from  (probability) distributions defined by the problem at hand and the calculation of suitable quantities, e.g., expected values. The MC method is mostly used for integration, optimization and uncertainty propagation. An example for an intensive application of the MC method in the natural sciences is the simulation of scattering processes in particle physics, where -- experimentally -- two types of particles are repeatedly collided head-on under the same conditions and the resulting interactions are measured by large detectors. The processes and their resulting experimental signatures can be accurately modeled, but only be calculated numerically. Another example stems from the engineering sciences, particularly from civil engineering, where the tilting of the foundation in consolidation problems of soil is the quantity of interest. There, the tilting is calculated exactly by a finite element method for specific, fixed values of its inputs, the soil parameters. However, the soil parameters are subject to uncertainties, which leads to an uncertain result. To model this uncertainty, MC sampling in combination with a Kriging model is employed in frequentist as well as Bayesian analyses, see e.g. Refs.~\cite{williams2006gaussian,vanmeegen2025benchmarking}.

A key challenge when using the MC method is to obtain a sampling distribution that represents the target function sufficiently well. While drawing random numbers for some specific distributions is trivial, a variety of methods have been developed since the late 1940s to obtain samples from arbitrary distributions. Besides the brute force approach of the good old “hit \& miss” algorithm, some of the most notable examples of refined methods include importance sampling and other variance-reduction techniques, as well as variants of random walk-inspired Markov Chain Monte Carlo (MCMC), e.g. the traditional Metropolis algorithm~\cite{10.1063/1.1699114}. For most applications, it is desirable that the resulting sample of random numbers is independently and identically distributed (IID), so that all random numbers are mutually independent. Samples generated by MCMC algorithms, for example, are not IID and often show a significant amount of autocorrelation.  

In specific applications, the choice of the sampling algorithm depends on the characteristics of the probability distribution under investigation, in particular its dimension and general topology. Conveniently, a variety of software packages and tools have been made available in recent years that offer sampling algorithms either in a specific context, e.g. MCMC methods in conjunction with statistical inference such as \texttt{Stan}~\cite{carpenter2017stan} or \texttt{BAT.jl}~\cite{Schulz:2021BAT}, or as stand-alone numerical methods. Regardless of whether one uses existing tools or develops one's own software code, it is important to evaluate the quality of the generated samples. In particular, with high-dimensional distributions it is often not easy to assess whether the generated samples represent the target function in all corners of the phase space.

Although the need to compare samples from high-dimensional distributions has been recognized in various fields, e.g. in applications in particle physics~\cite{Grossi:2024axb}, there is as yet no general, domain-neutral tool that is modular in terms of test metrics, flexible in the definition of test cases and easy to use. This paper attempts to close this gap by introducing the MCBench package as a general benchmark suite developed specifically to evaluate the quality of MC sampling algorithms. The suite provides a selection of concrete test functions -- with strong variation in dimension and complexity -- from which the user can generate samples. The resulting point clouds are read into the software package and then compared with a large set of IID-produced samples of the same distributions. For this purpose, several uni- and multidimensional metrics with sensitivities to a wide range of different features are evaluated. The paper is structured as follows: after an outline of the basic design idea of the benchmark suite (Section~\ref{sec:design_ideas_and_work_flow}), the different test functions (Section~\ref{sec:target_functions}) and the various implemented metrics (Section~\ref{sec:metrics}) are described, followed by the implementation details (Section~\ref{sec:implementation_details}). Finally, selected problems are discussed in detail to demonstrate the application of the benchmark suite (Section~\ref{sec:walkthrough_example}).


\section{Design Ideas and Workflow} \label{sec:design_ideas_and_work_flow}



The benchmark suite evaluates the quality of a sampling algorithm by comparing sample distributions drawn from well-defined test functions using (a) the algorithm under study and (b) algorithms that produce IID-distributions. Several metrics are used for the comparison. The general workflow of the benchmark suite is illustrated in Figure~\ref{fig:flow_chart} and comprises three steps:

\paragraph{Choice of test function and generation of samples}
Users select a specific test function from the available list and implement it as a target function in a sampling code of their choice that provides an implementation of the sampling algorithm under test. The test functions vary in  dimensionality and complexity. A detailed list of the available test functions is provided in Section~\ref{sec:target_functions}. Example implementations of all test functions included in the list to be used with \texttt{Stan}, \texttt{PyMC} and \texttt{BAT.jl} are available at the Github repository of the project\footnote{\url{https://github.com/tudo-physik-e4/MCBench}}. Users generate samples from the target function which are then saved into a file.

\paragraph{Comparison of distributions}
The user-generated samples are imported into the benchmark suite and compared against IID samples from the same target function, which are automatically generated by the software. The comparison is done using a set of metrics detailed in Section~\ref{sec:metrics}. For the basic statistical measures described in Section~\ref{sec:basic_statistical_measures},the metrics are evaluated on $m$ different batches of $n$ IID samples each. The results are aggregated into histograms to obtain distributions of the expected values, with the distribution widths reflecting the statistical variation due to the limited sample size in each batch. The user-generated samples are also partitioned into $m$ batches, each with an effective sample size of $n$. The computation of the effective sample size is based on the sampling method and can be implemented individually. 
As a default, the ratio of the squared sum of the weights divided by the sum of their squares is used as an estimate.
The metrics are again evaluated for each of these batches, enabling a direct comparison between the distributions derived from the user-generated samples and those obtained from the IID samples. For the advanced statistical measures described in Section~\ref{sec:advanced_statistical_measures} that involve two-sample comparisons, we evaluate the metrics by performing IID vs. IID comparisons to generate the expected distribution, as well as IID vs. user-generated sample comparisons to assess the performance of the sampling algorithm under investigation.

\paragraph{Visualization of the results}
In the final comparison plots, we present the distributions of the metrics in a normalized way, see e.g. Fig.~\ref{fig:flow_chart} (bottom right). The mean value of each metric calculated from the set of IID-sampled distributions are centered around 0 (black vertical line). The regions corresponding to the $1~\sigma$-, $2~\sigma$-, and $3~\sigma$-ranges of each metric calculated from the IID-sampled distribution are colored green, yellow, and red, respectively. By overlaying the mean value and the standard deviation of the metric evaluated on the user-generated samples (marker with horizontal error bars), we provide a straightforward visual comparison between the IID and the user-generated samples.

\begin{figure}[h]
    \centering
    \includegraphics[width=\textwidth]{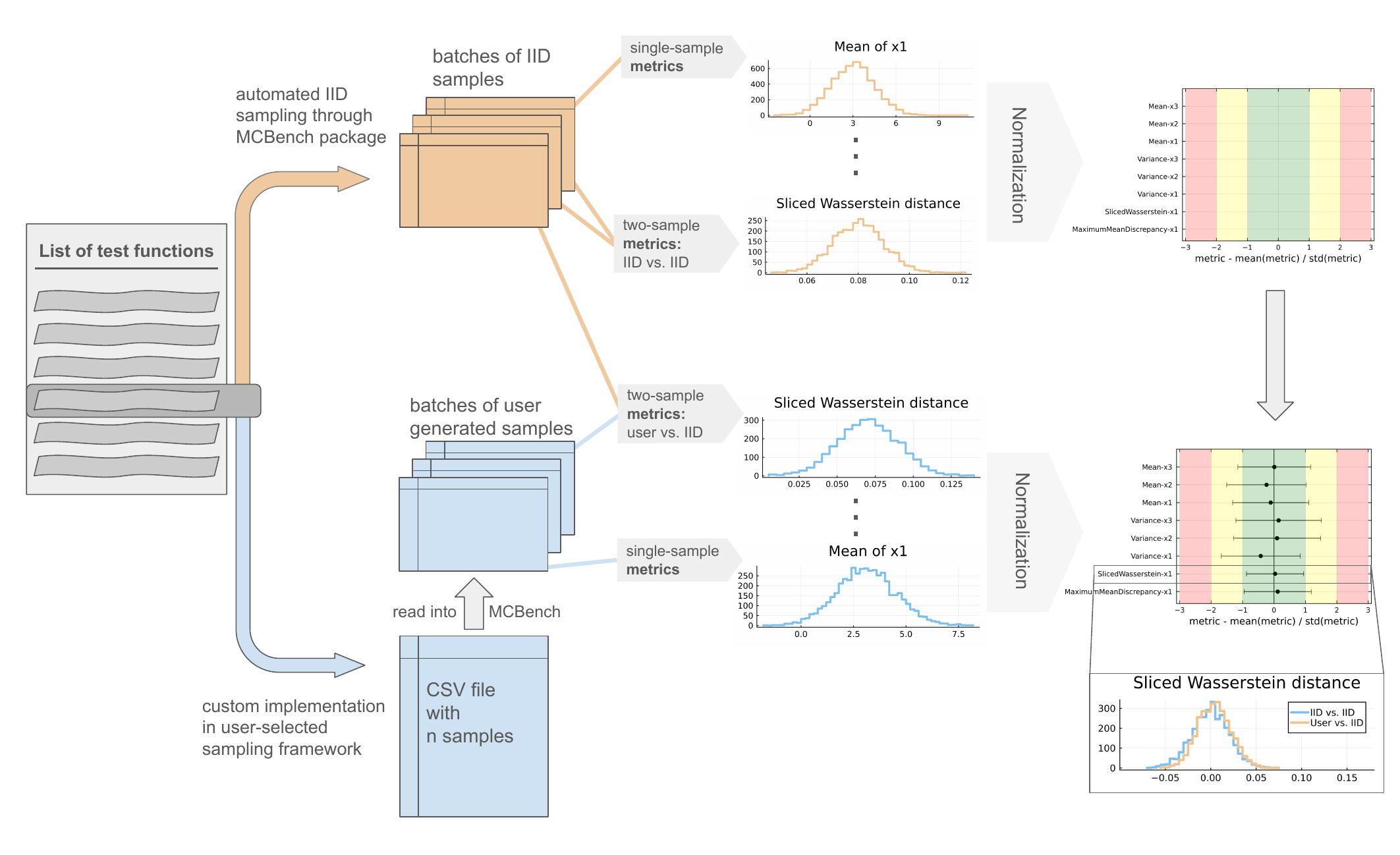}
    \caption{Illustration of the MCBench benchmark suite workflow.}
    \label{fig:flow_chart}
\end{figure}

\section{Target Functions} \label{sec:target_functions}

The benchmark suite provides a number of target functions that cover a wide range of complexities, including varying dimensionality, multi-modality, and correlations. They are chosen to represent practical applications or show specific features that challenge sampling algorithms. All target functions provided are IID-sampleable, which is a requirement for the benchmarking process. The list of target functions used is shown in Table~\ref{tab:test_cases}. We present two detailed walkthrough examples in Section~\ref{sec:walkthrough_example}, namely the one-dimensional Gaussian case and a three-dimensional bimodal distribution.

The target functions include standard normal distributions for multiple dimensionalities, correlated normal distributions, Cauchy distributions and mixture models which are used to create multimodal distributions.

In addition to the standard functions, we also include the eight schools example from the \texttt{posteriorDB} database Ref.~\cite{magnusson2024posteriordb}, which is a model used to estimate the effects of coaching programs on student performance.
The model consists of three hierarchical prior distributions:
\begin{align}
    \mu &= \mathcal{N}(0, 5) \nonumber \\
    \tau &= \text{Cauchy}(0, 5) \quad \text{with} \quad \tau \in (0, \infty)  \nonumber \\
    \theta_i &= \mathcal{N}(\mu, \tau)  \quad \text{with} \quad i = 1, \ldots, 8 \nonumber\,.
\end{align}
Given $\sigma_i$ and $\theta_i$, the likelihood is given by:
\begin{align}
    L(\theta, \sigma, y) = \prod_{i=1}^8 \mathcal{N}(y_i | \theta_i, \sigma_i) \, .
\end{align}
In order to compare the performance of the samplers, IID samples are drawn using a simple accept-reject algorithm for the eight schools example.
These test functions provide a basis for evaluating the performance of the samplers across a range of different distributions and complexities.
In addition to the implementation of the test functions in Julia, we also provide examples of how to use the test functions in PyMC and Stan.
These examples as well as supplementary information about the test cases, such as evaluation of the function for specific test points, are provided in the github repository for the benchmark suite~\footnote{\url{https://github.com/tudo-physik-e4/MCBench}}.

\begin{table}[h!]
\centering
\begin{tabular}{ccccc}
    \toprule
    \textbf{Name} & \textbf{Equation}  & \textbf{Properties} \\
    \midrule
    \makecell{Standard Normal 1D} & 
    $\begin{aligned}
    f(x \mid \mu, \sigma) &= \frac{1}{\sqrt{2 \pi} \sigma} \cdot e^{-\frac{(x-\mu)^2}{2\sigma^2}}
    \end{aligned}$ & 
    \makecell{Unimodal}\\
    
    \addlinespace
    
    \makecell{Standard Normal $k$D \\ uncorrelated} & 
    $\begin{aligned}
    \mathcal{N}_k(&\mu, \Sigma) = (2\pi)^{-k/2} \det(\Sigma)^{-1/2} \\
    &\cdot \exp\left( -\frac{1}{2} (x-\mu)^\top \Sigma^{-1} (x-\mu) \right)
    \end{aligned}$ &  
    \makecell{Unimodal, \\ $k=2,3,10,100$} \\
    
    





    \addlinespace

    \makecell{Standard Normal $k$D\\ weakly/strongly correlated} & 
    $\begin{aligned}
    \mathcal{N}_k(&\mu, \Sigma)\\
    \Sigma = r \cdot \mathbf{J} &+ (1-r) \cdot \mathbf{I}
    \end{aligned}$ & 
    \makecell{$r = 0.2, 0.9$ \\ $k=2,10,100$} \\

    \addlinespace



    \makecell{Mixture Normal kD\\ strongly correlated} & 
    $\begin{aligned}
    f(x \mid \mu, \Sigma) = 0.25 \; &\mathcal{N}_{k}(\mu, \Sigma) + 0.75 \; \mathcal{N}_{k}(-\mu, \Sigma)\\
    \Sigma = r &\cdot \mathbf{J} + (1-r) \cdot \mathbf{I}
    \end{aligned}$ &
    \makecell{Multimodal \\ $r = 0.9$,$k=3,10$} \\

    \addlinespace
 
    \makecell{Cauchy 1D} & 
    $\begin{aligned}
    f(x \mid x_0, \gamma) &= \frac{1}{\pi \gamma} \cdot \frac{1}{1 + \left( \frac{x-x_0}{\gamma} \right)^2}
    \end{aligned}$ & 
    \makecell{Unimodal} \\

    \addlinespace

    \makecell{Eight schools example} & 
    $\begin{aligned}
        L(\theta \mid y) = \prod_{i=1}^8 \frac{1}{\sqrt{2\pi \sigma_i^2}} \exp\left( -\frac{(y_i - \theta_i)^2}{2\sigma_i^2} \right)
    \end{aligned}$ & 
    \makecell{For more details \\ see Section~\ref{sec:target_functions}} \\
    
    \bottomrule
    \end{tabular}
\caption{List of test cases used in the benchmark suite including names, equations, and properties. The $\mathbf{J}$ refers to a matrix filled with ones while $\mathbf{I}$ refers to the identity matrix. More details about the test cases can be found in the documentation of the benchmark suite.}
\label{tab:test_cases}
\end{table}

\section{Metrics}\label{sec:metrics}

Our benchmark suite offers a collection of statistical distance measures, or metrics, to assess discrepancies between two sets of samples and to evaluate if they are drawn from the same target distribution. These metrics vary in their sensitivity to specific features of the distributions and their robustness in detecting discrepancies. We employ both basic, well-known metrics and more advanced statistical distance measures, which we briefly summarize in the following section.

\subsection{Basic Statistical Distance Measures}
\label{sec:basic_statistical_measures}
Traditional descriptive statistics, such as the mean, variance, and chi-square test statistics provide fundamental insights into the characteristics of a data set. 
These basic measures allow users to easily grasp the distribution's key features, including its central tendency (mean) and variability (variance). They are particularly well-suited for analyzing low-dimensional data or projections of high-dimensional data, offering a preliminary and intuitive understanding of the data distribution.
However, these basic statistical measures reach their limitations when analyzing high-dimensional or multimodal data. The overall structure and distributional characteristics of such data are often inadequately captured due to the so-called \emph{curse of dimensionality}. Therefore, introducing more complex statistical distance measures becomes necessary to examine high-dimensional data at a finer level of detail.

\subsection{Advanced Statistical Distance Measures}\label{sec:advanced_statistical_measures}
Complex distance metrics are specifically designed to measure the similarity of two distributions in high-dimensional and complex cases. In particular, such metrics not only refer to the overall shape of data distributions but also to particular differences between them, and hence, provide a more expressive method of comparison.  In general, statistical distance measures quantify the divergence between two probability distributions and are widely used in many data analysis and machine learning tasks. So far, we have implemented two advanced statistical distance measures: the (sliced) Wasserstein distance (SWD) and the maximum mean discrepancy (MMD). Further statistical distance metrics are planned to be implemented in the future.

\subsubsection{The Sliced Wassertein Distance}
The Wasserstein distance, or Earth Mover's distance, is a statistical measure of dissimilarity between two probability distributions. In the context of optimal transport theory, the Wasserstein distance can be understood as the minimum cost required to transport a mass from one probability distribution to another. This theory was first proposed by the French mathematician Gaspard Monge in 1781 \cite{monge1781mémoire} and later further popularized and developed by the Russian mathematician Leonid Kantorovich in the 20th century \cite{kantorovich1942translocation}. Recently, the Wasserstein distance has been widely used in different fields of machine learning and statistics. Examples include  Wasserstein-Generative-Adversarial-Networks~\cite{arjovsky2017wasserstein} and the approximation of the posterior distributions in Bayesian statistics \cite{ding2024scalable}.

According to Ref.~\cite{panaretos2019statistical}, the
$p$-Wasserstein distance between two probability measures $\mu$ and $\nu$ is given by 
\begin{equation*}
W_p(\mu, \nu) = \inf_{\substack{X \sim \mu \\ Y \sim \nu}} \left( \mathbb{E} \|X - Y\|^p \right)^{1/p} = \left( \inf_{\gamma \in \Gamma(\mu, \nu)} \int_{\mathcal{X} \times \mathcal{X}} \|x - y\|^p \, d\gamma(x, y) \right)^{1/p}, \quad p \ge 1.
\end{equation*}

Here, the element $\gamma \in \Gamma(\mu, \nu)$ can be viewed as a coupling of $\mu$ and $\nu$, which is a joint distribution with the given marginal distributions $\mu$ and $\nu$ on each axis. In the discrete case, this analytical expression can be interpreted as follows: for $\gamma \in \Gamma(\mu, \nu)$ and any pair of positions $(x, y)$, the value of $\gamma(x, y)$ represents the proportion of the mass of $\mu$ that is transferred from $x$ to $y$ in order to transform $\mu$ into $\nu$.

Although the Wasserstein distance can accurately measure the distance of two probability distributions, especially for its consideration of the geometric structure of the distributions, the computational complexity is generally large with $O(n^3\log n) $. Also, it suffers the curse of dimensionality, where the sample complexity increases with the dimension $d$ by $O(n^{-1/d})$, see~Ref.~\cite{nguyen2022amortized}. This means that when $d$ increases, an exponential increase in sample size is required to maintain the same estimation accuracy.

To overcome the problematic scaling for high-dimensional test cases, the benchmark suite uses the sliced Wasserstein distance. It approximates the difference between high-dimensional distributions by first projecting them onto one dimension along various random directions, then computing the one-dimensional Wasserstein distance for each projection, and finally averaging these one-dimensional distances. 
This method was first proposed in Ref.~\cite{rabin2012wasserstein} using a random projection approach.
The  random-projection-based SWD is defined as: \\

\begin{definition}[Sliced Wasserstein Distance]
    
Let $ \mu $ and $ \nu $ be probability distributions defined on $ \mathbb{R}^d $ with the sliced Wasserstein distance defined as:

\begin{equation}
\text{SWD}_p(\mu, \nu) = \left( \int_{\mathbb{S}^{d-1}} W_p^p\left( P_\theta \mu, P_\theta \nu \right) \, d\theta \right)^{1/p} \, .
\end{equation}
\end{definition}

Here, $ \mathbb{S}^{d-1} $ is the unit sphere in $ \mathbb{R}^d $.
$ P_\theta \mu $ and $ P_\theta \nu $ denote the one-dimensional projected distributions obtained by projecting $ \mu $ and $ \nu $ onto the direction $ \theta $ respectively.
The $W_p\left( P_\theta \mu, P_\theta \nu \right) $ is the $p$-Wasserstein distance projected onto the direction $ \theta $.

Since it is not possible (and not necessary) to project onto all possible $ \theta $ directions, the Monte Carlo method is used to approximate the above integrals with a finite number of randomly sampled directions $ \{ \theta_i \}_{i=1}^L $:

\begin{equation}
\text{SWD}_p(\mu, \nu) \approx \left( \frac{1}{L} \sum_{i=1}^L W_p^p\left( P_{\theta_i} \mu, P_{\theta_i} \nu \right) \right)^{1/p}.
\end{equation}

For a detailed description of the algorithm for sliced Wasserstein distance, we refer to Algorithm~\ref{algo:swdist} in Appendix~\ref{Appendix:Algo}, for more technical details of the SWD we refer to Appendix Section~\ref{subsec:WS}.

\subsubsection{Maximum Mean Discrepancy}

The maximum mean discrepancy (MMD) measures the distance between two distributions by comparing moments in a high-dimensional feature space. It maps the data into a Reproducing Kernel Hilbert Space (RKHS) and computes the distance between the mean embeddings of the two distributions to measure their divergence. The advantage of MMD is its effectiveness in handling high-dimensional data and its good performance with complex distributions. Ref.~\cite{gretton2012kernel} defines it as \\

\begin{definition}[Maximum Mean Discrepancy in RKHS]
Let $(\mathcal{H}, k)$ be a reproducing kernel Hilbert space with kernel function $k: \mathcal{X} \times \mathcal{X} \to \mathbb{R}$. For two probability measures $p$ and $q$ defined on a metric space $\mathcal{X}$, assuming $\mathbb{E}_x[k(x,x)] < \infty$ and $\mathbb{E}_y[k(y,y)] < \infty$, the maximum mean discrepancy is defined as:

\begin{equation}
\text{MMD}[p,q] = \|\mu_p - \mu_q\|_{\mathcal{H}}
\end{equation}

where $\mu_p$ and $\mu_q$ are the mean embeddings of distributions $p$ and $q$ respectively in $\mathcal{H}$, defined as:

\begin{equation}
\mu_p = \mathbb{E}_{x\sim p}[k(x,\cdot)] \quad \text{and} \quad \mu_q = \mathbb{E}_{y\sim q}[k(y,\cdot)].
\end{equation}

The squared MMD can be expanded in terms of kernel functions as:

\begin{equation}
\begin{split}
\text{MMD}^2[p,q] = \mathbb{E}_{x,x'\sim p}[k(x,x')] + \mathbb{E}_{y,y'\sim q}[k(y,y')] \\
- 2\mathbb{E}_{x\sim p,y\sim q}[k(x,y)].
\end{split}
\end{equation}

\end{definition}

MMD has been widely applied in numerous machine learning tasks, including two-sample tests, domain adaptation, and generative model evaluation. Compared to the Wasserstein distance, MMD has an advantage in high-dimensional data processing because it can flexibly capture the higher-order structural features of the data through the choice of kernel functions. We refer to Section~\ref{subsec:MMD} of the appendix for more technical details of the MMD.

The choice of kernel function in MMD significantly affects its performance and ability to characterize differences between distributions. Commonly used kernels include the Gaussian, Laplacian, Polynomial, Linear, and Sigmoid kernels. The benchmark suite uses a Gaussian kernel $k_G(x, y)$ that is defined as:

\begin{equation}
k_G(x, y) = \exp\left(-\frac{\|x-y\|^2}{2\sigma^2}\right)  \, . 
\end{equation}

The parameter $\sigma$ (the kernel width) plays a critical role in determining its sensitivity. A commonly used approach for selecting $\sigma$ is the \textit{median heuristic}, which sets $\sigma$ to the median of pairwise distances among data points. 
%
For a more detailed discussion on the characteristic kernels and the parameter choices of the MMD, we refer to Refs.~\cite{sriperumbudur2010hilbert,gretton2012optimal}. \\

In practice, the empirical squared MMD can be computed as follows:

\begin{equation}
\begin{split}
    \text{MMD}^2(X, Y) = & \frac{1}{n^2}\sum_{i,j=1}^n k(x_i, x_j) + \frac{1}{m^2}\sum_{i,j=1}^m k(y_i, y_j) \\ & - \frac{2}{nm}\sum_{i=1}^n\sum_{j=1}^m k(x_i, y_j),
\end{split}
\end{equation}
where $k(\cdot,\cdot)$ represents the kernel function, $X = {x_1, x_2, \dots, x_n}$ contains $n$ data points, and $Y = {y_1, y_2, \dots, y_m}$ contains $m$ data points.

Despite its theoretical elegance and ability to detect differences between distributions, the practical application of MMD faces significant computational challenges, especially when dealing with large-scale datasets.

Therefore, in practice, a common approach is to approximate the kernel calculations. A popular method is to use the Random Fourier Features (RFF). The theoretical justification for RFF comes from Bochner's theorem \cite{bochner1959lectures}. For a detailed theoretical explanation of Bochner's theorem and RFF, we refer to Appendix \ref{subsec:MMD}.

Both, the original MMD calculation for the Gaussian kernel as well as its corresponding RFF approximation are implemented in the benchmark suite. It can be shown that the computational complexity for the original Gaussian MMD is $O(d \cdot mn)$. If $n \approx m$, the computational complexity becomes $O(d \cdot n^2)$. This is because computing all pairwise kernel evaluations requires $O(n^2)$ operations, and each kernel computation involves $O(d)$ time due to the dimensionality of the data.

For large sample sizes, the memory and computational requirements would be substantial. Therefore, we recommend using the RFF to obtain the approximated MMD for large sample sizes, with a computational complexity of $O(D(n + m)d)$, where $D$ is the number of Random Fourier Features. In practice, if $D \ll n$, the computational time and memory requirements would be significantly reduced. For the detailed implementation of the two algorithms, we refer to Algorithm~\ref{algo:MMD} and Algorithm~\ref{algo:RFFMMD} in Appendix~\ref{Appendix:Algo}.
 
\section{Implementation Details}\label{sec:implementation_details}
We have developed our benchmark suite in the Julia programming language, which combines high performance with user-friendly syntax. Julia uses a just-in-time compiler that ensures efficient execution of computationally intensive tasks such as MC sampling. 
Its syntax is similar to that of the Python programming language, featuring a script-like structure and dynamic typing, making it easy to learn and straightforward to use for scientific applications.
Julia's rich ecosystem of packages for numerical calculations and statistical analysis makes it an ideal choice for implementing our benchmark suite.

The benchmark suite comes in the form of a Julia package and can be installed via the Julia package manager.
It offers three main objects, corresponding to the three essential components needed for benchmarking: \emph{test case}, \emph{sampler}, and \emph{metrics}. The implementation details of these components are discussed in the following.

The \emph{Test case} objects describe the selected IID-sampleable target distributions. 
These objects are designed to be compatible with the \texttt{Distributions.jl} package \cite{JSSv098i16,Distributions.jl-2019}, which makes them easy to use as parts of \emph{test cases}. 
Typically, the user will choose a test case from the list described in Section \ref{sec:target_functions}. But also custom test cases can be implemented using the \texttt{TestCase} datatype and linking to distributions or mixture model distributions from the \texttt{Distributions.jl} together with the chosen boundaries of the free parameters. For more complex or custom target functions that are not part of the \texttt{Distributions.jl} package, users can also define custom \texttt{Target} datatypes. In such a case, the user needs to implement the \texttt{Base.rand} and \texttt{Distributions.logpdf} functions for these types, which are needed for generating IID samples. 
Most of the target functions presented in Section \ref{sec:target_functions} are implemented using the \texttt{Distributions.jl} package. The \emph{Eight Schools} problem is implemented with a custom target function based on \texttt{posteriorDB} derived using \texttt{Stan} and \texttt{BridgeStan.jl} \cite{Roualdes2023}.

The \emph{sampler} object consists of information about the selected sampling algorithm and its properties. 
Samplers can be added by defining new structs as subtypes of the \texttt{SamplingAlgorithm} datatype and providing a specific implementation of the \texttt{sample} functions. 
These types provide the functionality to define a \texttt{TestCase}, a sampler and a number of samples, and they return a set of samples as a \texttt{DensitySampleVector} object using the desired sampler.

As an alternative to sampling the distributions on the fly by directly interfacing to a sampling framework, our benchmark suite allows to read in pre-sampled data using the \texttt{FileBasedSampler} datatype and a user-defined path to either a directory or single file. 
The samples should be stored in a file format that can be read by the \texttt{FileBasedSampler} object, e.g. the CSV or HDF5 format.
Samples can also be provided directly as a \texttt{DensitySampleVector} object.
This allows the user to use custom formats supported by any other Julia package, such as \texttt{JLD2.jl}~\footnote{\hyperlink{https://github.com/JuliaIO/JLD2.jl}{https://github.com/JuliaIO/JLD2.jl}} , and \texttt{HDF5.jl}~\footnote{\hyperlink{https://github.com/JuliaIO/HDF5.jl}{https://github.com/JuliaIO/HDF5.jl}}.  

The \emph{metric} objects contain the result of the evaluated metric and its properties.
Each of the metrics introduced in Section \ref{sec:metrics} is its own type and extends the abstract type \texttt{TestMetric}.
The implementation of the metric requires a \texttt{calcmetric} function that takes a \texttt{DensitySampleVector} and returns a vector of metrics itself.
In the case of two sample tests, the metric type is \texttt{TwoSampleTestMetric} and the function receives two \texttt{DensitySampleVector} instances.
The evaluation of a metric on the sample batches is performed in parallel using the \texttt{Folds.jl}~\footnote{\hyperlink{https://github.com/JuliaFolds/Folds.jl}{https://github.com/JuliaFolds/Folds.jl}} package, which allows for efficient computations on multiple threads.
The test statistics for each metric and test case can be built iteratively and are written into a \texttt{JSON} file for further analysis.
The calculation of the Wasserstein Distance is based on the \texttt{R} implementation of the \texttt{transport} package~\cite{Rtransport}.
For the MMD, the \texttt{IPMeasures.jl}~\footnote{\hyperlink{https://github.com/JuliaFolds/IPMeasures.jl}{https://github.com/JuliaFolds/IPMeasures.jl}} package is used while the \texttt{Distances.jl}~\footnote{\hyperlink{https://github.com/JuliaFolds/Distances.jl}{https://github.com/JuliaFolds/Distances.jl}} package provides the calculation for the kernel width.

As an example, we can use the following code snippet to test the Metropolis-Hastings sampler of the \texttt{BAT.jl} package on a standard normal distribution in three dimensions which is provided as a test case. 
The \texttt{marginal\_mean}, \texttt{marginal\_variance} and \texttt{sliced\_wasserstein\_distance} metrics are used for the evaluation.
\begin{minted}[baselinestretch=1.2,bgcolor=bgmint,fontsize=\footnotesize]{julia}      
    testcase = Standard_Normal_3D_Uncorrelated
    metrics = [marginal_mean(),marginal_variance(),sliced_wasserstein_distance()]
    sampler = BATMH()
\end{minted}
Alternatively, the user can use the \texttt{FileBasedSampler} to read in pre-sampled data:
\begin{minted}[baselinestretch=1.2,bgcolor=bgmint,fontsize=\footnotesize]{julia}      
    sampler = FileBasedSampler("samples.csv")
\end{minted}
After defining the test case, the set of metrics used, and the sampler, the user can build the test statistic and plot the metrics using the following commands:
\begin{minted}[baselinestretch=1.2,bgcolor=bgmint,fontsize=\footnotesize]{julia}      
    build_teststatistic(testcase, metrics, n=100, n_samples=10^5)
    build_teststatistic(testcase, metrics, n=100, n_samples=10^5, s=sampler)
    plot_metrics(testcase,metrics,BATMH())
\end{minted}

The resulting plot will show the distribution of the metrics for the given test case and sampler compared to IID samples from the target distribution as will be demonstrated in the following section. \\

\section{Walkthrough Example}\label{sec:walkthrough_example}

We demonstrate the usage of the benchmark suite by evaluating the performance of the default Metropolis-Hastings sampler provided by the \texttt{BAT.jl} package.
The evaluation is carried out on the basis of two test cases of varying complexity: A basic example using a standard normal distribution in three dimensions and a more complex, multimodal mixture model of two correlated normal distributions in three dimensions.

\subsection{Basic Example: Standard Normal Distribution in 3D}
The test case is a three-dimensional normal distribution without correlation. It is expressed using the \texttt{Distributions.jl} package as follows, which allows the benchmark suite to generate IID samples on the fly:
\begin{minted}[baselinestretch=1.2,bgcolor=bgmint,fontsize=\footnotesize]{julia}      
    f = MvNormal(zeros(3), I(3))
    bounds = NamedTupleDist(x = [-10..10 for i in 1:3])
    Standard_Normal_3D_Uncorrelated = Testcases(f,bounds,3,"Normal-3D-Uncorrelated")
\end{minted}
Then, the metrics and the sampler to be used for this test case are defined:
\begin{minted}[baselinestretch=1.2,bgcolor=bgmint,fontsize=\footnotesize]{julia}      
    metrics = [marginal_mean(),marginal_variance(),
              sliced_wasserstein_distance(),maximum_mean_discrepancy()]
    sampler = BATMH(n_steps=10^5, nchains=10)
\end{minted}
In this case, the sampler is directly interfaced to the \texttt{BAT.jl} package.
Alternatively, when using a different sampling software package, it is possible to load the sample using the \texttt{FileBasedSampler} as follows :
\begin{minted}[baselinestretch=1.2,bgcolor=bgmint,fontsize=\footnotesize]{julia}      
    sampler = FileBasedSampler("samples.csv")
\end{minted}
After defining the test case, metrics, and sampler, the test statistic for the IID samples and the samples generated by the Metropolis-Hastings sampler are generated:
\begin{minted}[baselinestretch=1.2,bgcolor=bgmint,fontsize=\footnotesize]{julia}      
    teststatistics_IID = build_teststatistic(Standard_Normal_3D_Uncorrelated, metrics, 
                        n=100, n_steps=10^5, n_samples=10^5)
    teststatistics_MH = build_teststatistic(Standard_Normal_3D_Uncorrelated, metrics, 
                        n=100, n_steps=10^5, n_samples=10^5, s=sampler)
\end{minted}
As an example, a one-dimensional marginalized distribution for the IID-samples (blue) and the samples using using the Metropolis-Hastings algorithm (red) are shown in Figure~\ref{fig:3dnormal_samples_distributions}. No obvious difference between the two distributions is visible by eye.
For the metrics, 100 batches of $10^5$ samples each are used to evaluate the mean values and variances of all marginal distributions as well as the SWD and the MMD.
Finally, an overview of the metrics is generated comparing the samples from the Metropolis-Hastings sampler with the IID samples:
\begin{minted}[baselinestretch=1.2,bgcolor=bgmint,fontsize=\footnotesize]{julia}      
    plot_metrics(Standard_Normal_3D_Uncorrelated,metrics,sampler)
\end{minted}

\begin{figure}
    \centering
    \begin{subfigure}[t]{0.48\textwidth}
        \centering
        \includegraphics[width=\textwidth]{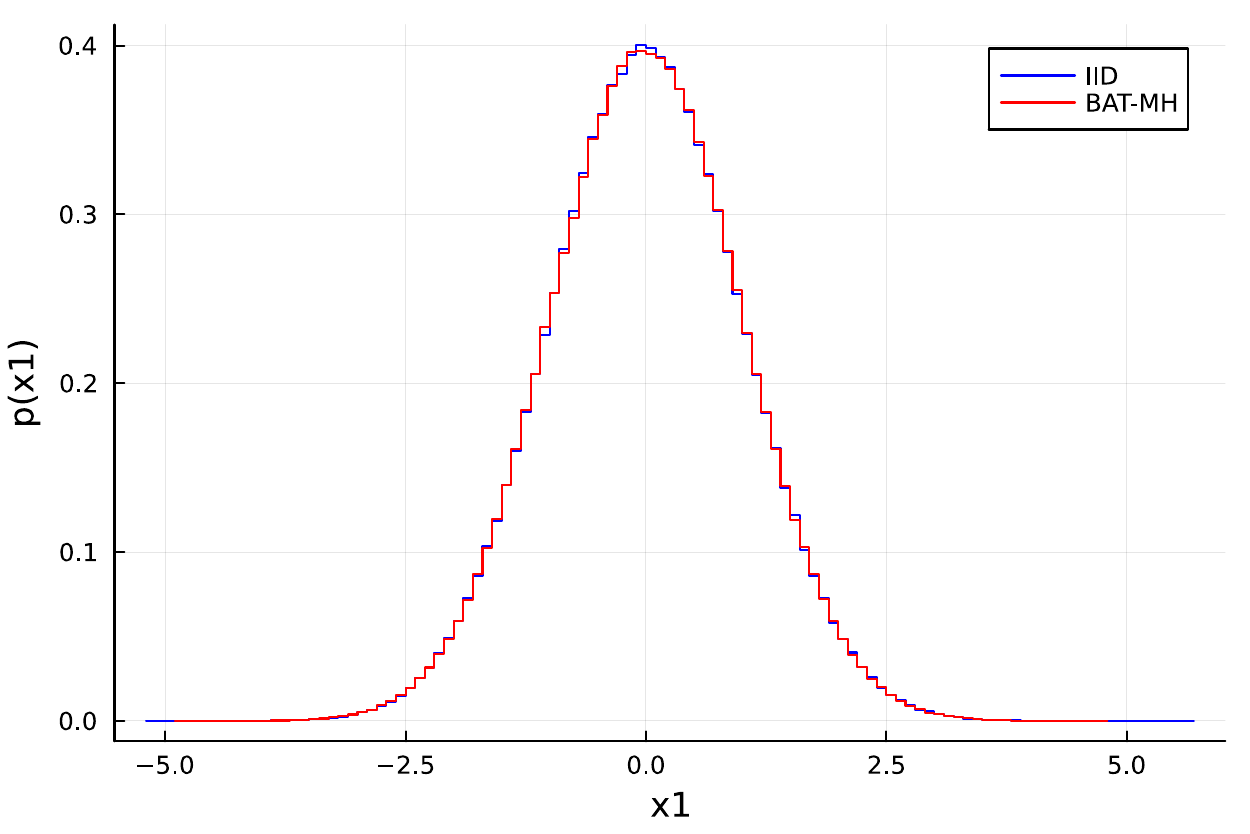}
        \caption{Uncorrelated 3D standard normal distribution.}
        \label{fig:3dnormal_samples_distributions}
    \end{subfigure}
    \hfill
    \begin{subfigure}[t]{0.48\textwidth}
        \centering
        \includegraphics[width=\textwidth]{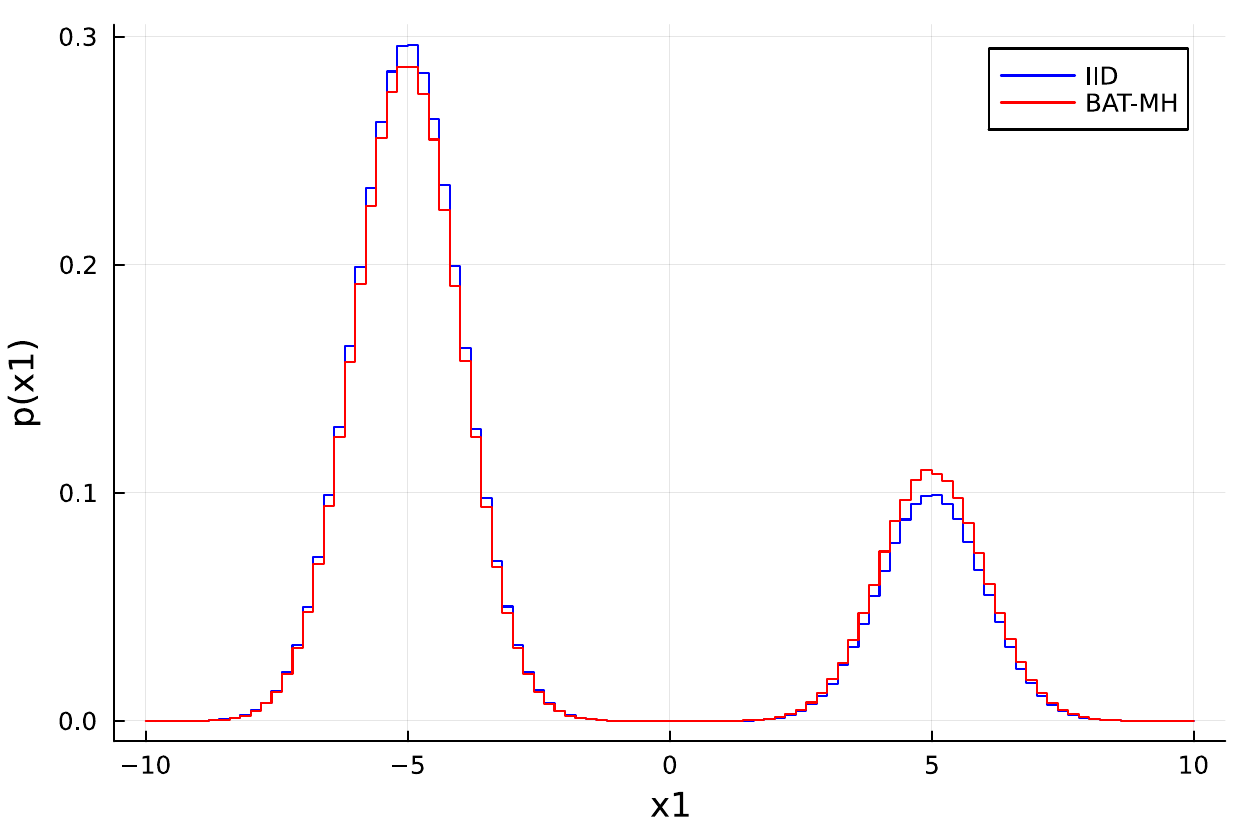}
        \caption{Mixture model with two 3D correlated normal distributions.}
        \label{fig:3dnormal_corr_samples_distributions}
    \end{subfigure}
    \caption{One-dimensional marginalized distributions for both test functions used in the examples using both IID sampling (blue) and Metropolis-Hastings (red).}
    \label{fig:samples_distributions}
\end{figure}

Figure~\ref{fig:metrics_3dnormal_example}  shows the expected mean values (black vertical line at 0) and standard deviations (colored regions for one, two, and three standard deviations) of each metric evaluated with the IID samples and compares them with the mean values (marker) and standard deviations (horizontal error bars) of the metrics evaluated with the samples provided by the Metropolis-Hasting sampler. In this concrete example, the metrics derived with the Metropolis-Hastings sampler roughly match the expectations in a sense that the mean values and standard deviations are close. A slight overestimation of the variance of the metrics is observed and is due to an overestimation of the effective sample size. The effective sample size is smaller than the actual sample size because of the autocorrelation of the samples that is intrinsically introduced by the Metropolis-Hastings algorithm. 

\begin{figure}
    \centering
    \includegraphics[width=0.8\textwidth]{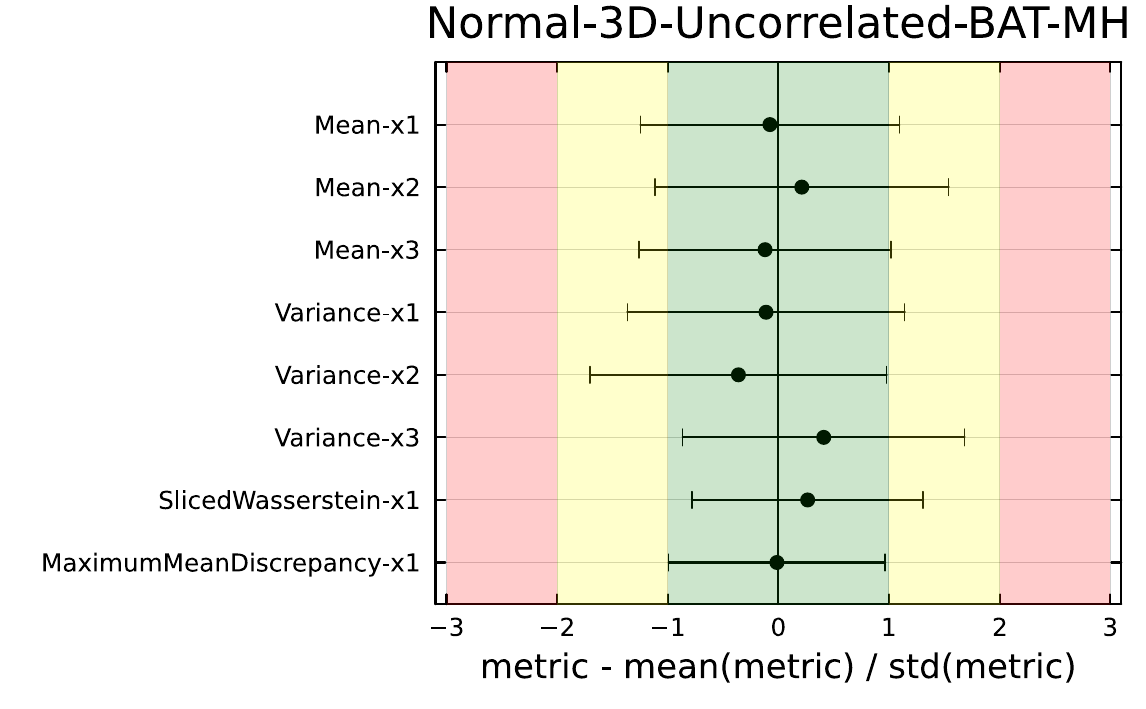}
    \caption{Mean values and standard deviations of several metrics calculated with IID samples (colored bands) and samples produced with the Metropolis-Hastings sampler implemented in BATjl. The samples are drawn from an uncorrelated standard normal distribution in three dimensions. The metrics are normalized to the mean and variance of the IID samples from the target distribution.}
    \label{fig:metrics_3dnormal_example}
\end{figure}

For a closer look on the  metrics, the distribution of the metrics for the samples generated by the Metropolis-Hastings sampler and the IID sampler can be plotted:
\begin{minted}[baselinestretch=1.2,bgcolor=bgmint,fontsize=\footnotesize]{julia} 
    plot_teststatistic(Standard_Normal_3D_Uncorrelated,metrics[1],sampler,nbins=20)
    plot_teststatistic(Standard_Normal_3D_Uncorrelated,metrics[3],sampler,nbins=20)
\end{minted}
The distributions for the mean of the first dimension (left) and the SWD (right) are shown in Figure~\ref{fig:metrics_3dnormal_example_detail} and provide a detailed view of the distribution of the metrics for the Metropolis-Hastings sampler (red) compared to the IID samples (blue). 
\begin{figure}
    \centering
    \begin{minipage}{0.48\textwidth}
        \centering
        \includegraphics[width=\textwidth]{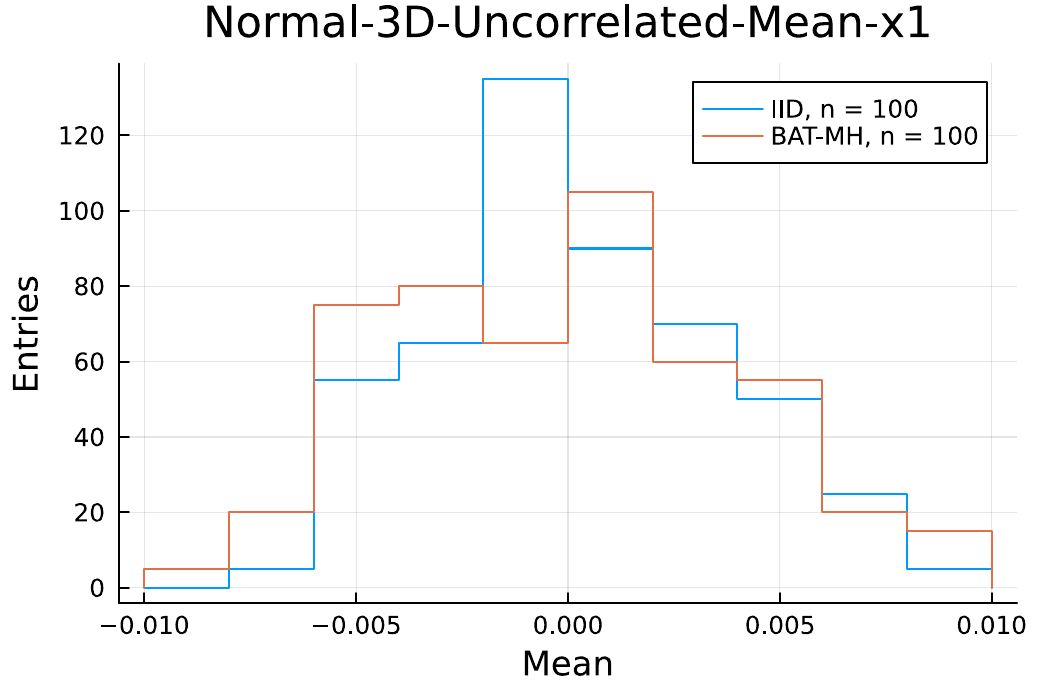}
        \caption*{(a) Mean of the marginalized distribution of the first dimension.}
    \end{minipage}
    \hfill 
    \begin{minipage}{0.48\textwidth}
        \centering
        \includegraphics[width=\textwidth]{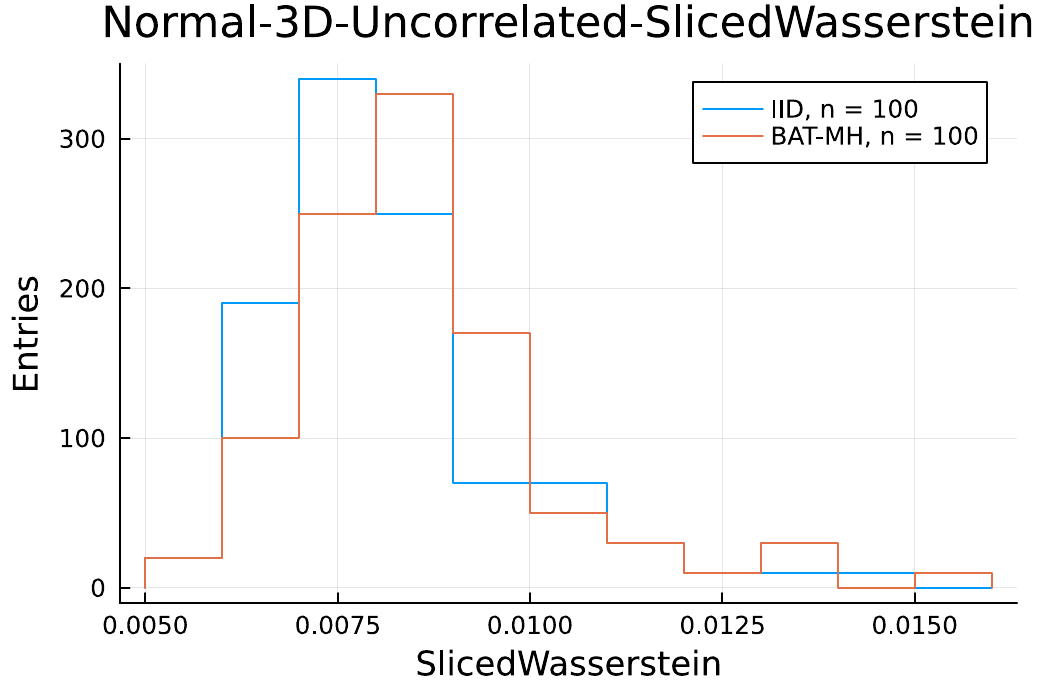}
        \caption*{(b) Sliced Wasserstein distance for the 3D distribution.}
    \end{minipage}
    \caption{Example of the distributions of the metrics for the Metropolis-Hastings sampler and the IID samples for a standard normal distribution in 3D.}
    \label{fig:metrics_3dnormal_example_detail}
\end{figure}

\FloatBarrier
\subsection{Complex Example: Mixture Model of Correlated Normal Distributions in 3D}
The second test case is a mixture model of two correlated normal distributions in 3D.
The definition of the test case can be expressed using the \texttt{Distributions.jl} package as follows:
\begin{minted}[baselinestretch=1.2,bgcolor=bgmint,fontsize=\footnotesize]{julia}      
    r = 5
    f1 = MvNormal(r*ones(10), ones(10,10)*0.9 + I(10)*0.1)
    f2 = MvNormal(-r*ones(10), ones(10,10)*0.9 + I(10)*0.1)
    f = MixtureModel([f1,f2], [0.25, 0.75])
    bounds = NamedTupleDist(x = [-100..100 for i in 1:10])
    normal_3d_multimodal_10std = Testcases(f,bounds,10,"Normal-3D-Multimodal-10std")
\end{minted}
The generation of samples, the evaluation of the metrics, and the generation of the test statistics are performed analogously to the basic example. As an example, a one-dimensional marginalized distribution of the samples for both methods is shown in Figure~\ref{fig:3dnormal_corr_samples_distributions}. 
While the samples might look similar at first glance the relative strength of the magnitude of the modes is significantly different. This effect is observed in all one-dimensional marginals.
The resulting distribution of the metrics for the example are shown in 
Figure~\ref{fig:metrics_3dnormal_corr_example}.
\begin{figure}[b]
    \centering
    \includegraphics[width=0.8\textwidth]{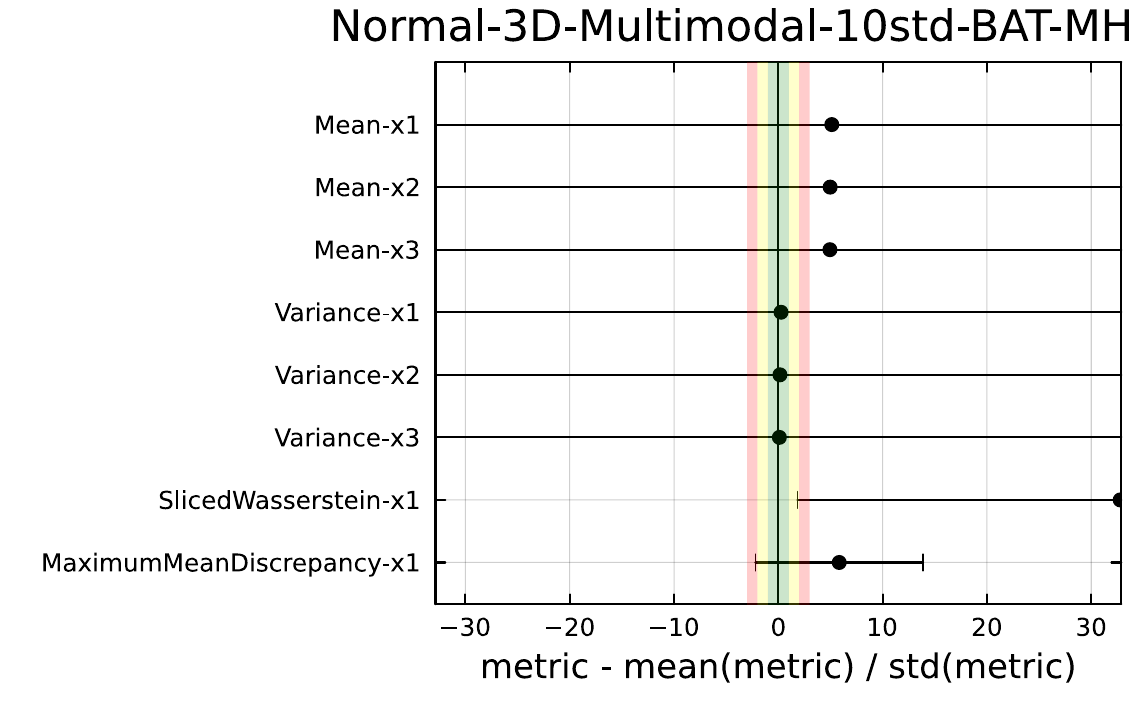}
    \caption{Example of metric distributions for the Metropolis-Hastings sampler on a mixture model of correlated normal distributions in 3D. The metrics are normalized to the mean and variance of the IID samples from the target distribution.}
    \label{fig:metrics_3dnormal_corr_example}
\end{figure}
The mean value and standard deviations of almost all metrics derived from samples obtained with the Metropolis-Hastings algorithm clearly deviate from those derived from the IID samples. This shows that the mixture model represents a challenge for the sampling algorithm compared to the basic example. This is to be expected, as the Metropolis-Hastings sampler is known to have difficulties sampling multimodal distributions, as it can get stuck in local modes. As a result, the sampler often does not meet the convergence criteria and therefore draws a set of samples that does not represent the target distribution. 
This example demonstrates the capability of the benchmark suite to evaluate the performance of samplers on a wide range of test cases with varying complexities.

\section{Conclusion}

In this paper, we have proposed a flexible and easy-to-use benchmark suite to evaluate the quality of MC sampling algorithms. The software package MCBench enables quantitative comparisons of samples drawn from well-defined multidimensional distributions by evaluating a variety of metrics. In addition to basic statistics, more sophisticated statistical measures such as the SWD and the MMD are used. The benchmark suite also features visualization and detailed information on the comparison of the samples.

In the future, we will increase the sensitivity of the benchmark suite by expanding the list of target functions and by increasing the number of metrics. In particular, we plan to include target functions with special features, such as the Rosenbrock function and the Himmelblau function, which serve as proxies for more complex scenarios and multimodal functions. In addition, we plan to explore more sophisticated distance metrics to further improve the accuracy and scope of sample quality assessment. In particular, we will explore other MMD kernels to test their effectiveness and applicability in different sampling algorithms. These additions will complete our benchmark suite and help to qualitatively improve the development and optimization of MC sampling algorithms.

\bmhead{Supplementary information}


\bmhead{Acknowledgments}
The authors are supported by the German Federal Ministry of Education and Research
(BMBF) in the ErUM-Data action plan via the KISS consortium (Verbundprojekt 05D2022).



\bibliography{references.bib}

\clearpage
\pagebreak

\begin{appendices}
\section{Algorithms} \label{Appendix:Algo}

\begin{algorithm}[H]
\caption{Calculating the sliced Wasserstein distance}
\label{algo:swdist}
\begin{algorithmic}[1]
\Require Two sample sets: $ X = \{ x_j \}_{j=1}^N \sim \mu $, $ Y = \{ y_j \}_{j=1}^N \sim \nu $; Number of projections $ L $; Order of distance $ p $
\Ensure Approximate sliced Wasserstein distance $ \text{SWD}_p(\mu, \nu) $
\State Initialize accumulation variable $ S \leftarrow 0 $
\For{ $ i = 1 $ to $ L $ }
    \State Randomly sample a unit vector from the unit sphere $ \mathbb{S}^{d-1} $ $ \theta_i $
    \State Calculate the projection:
    \[
    \alpha_j = \theta_i^\top x_j, \quad \beta_j = \theta_i^\top y_j, \quad \text{for } j = 1, 2, \dots, N
    \]
    \State Sort $ \{ \alpha_j \} $ and $ \{ \beta_j \} $ in ascending order, respectively, to obtain the sorted sequences $ \{ \alpha_{(j)} \} $ and $ \{ \beta_{(j)} \} $
    \State Compute the one-dimensional Wasserstein distance:
    \[
    W^{(i)} = \left( \frac{1}{N} \sum_{j=1}^N \left| \alpha_{(j)} - \beta_{(j)} \right|^p \right)^{1/p}
    \]
    \State Update accumulation variable $ S \leftarrow S + \left( W^{(i)} \right)^p $
\EndFor
\State Calculate SWD:
\[
\text{SWD}_p(\mu, \nu) = \left( \frac{S}{L} \right)^{1/p}
\]
\State \Return $ \text{SWD}_p(\mu, \nu) $
\end{algorithmic}
\end{algorithm}

\begin{algorithm}[H]
\caption{Calculating maximum mean discrepancy using Gaussian Kernel}
\label{algo:MMD}
\begin{algorithmic}[1]
\Require Two sample sets: $\{ x_i \}_{i=1}^n \sim p(x)$ and $\{ y_j \}_{j=1}^m \sim q(y)$; Gaussian kernel bandwidth parameter $\sigma$
\Ensure Empirical MMD value $\text{MMD}$; median heuristic calculation of $\sigma$ is also possible
\State \textbf{Step 1: Compute Kernel Matrices}
\State Compute the kernel matrix $K_{XX}$ where $(K_{XX})_{i,j} = \exp\left(-\frac{\|x_i - x_j\|^2}{2\sigma^2}\right)$ for all $i, j \in \{1, \dots, n\}$
\State Compute the kernel matrix $K_{YY}$ where $(K_{YY})_{i,j} = \exp\left(-\frac{\|y_i - y_j\|^2}{2\sigma^2}\right)$ for all $i, j \in \{1, \dots, m\}$
\State Compute the cross-kernel matrix $K_{XY}$ where $(K_{XY})_{i,j} = \exp\left(-\frac{\|x_i - y_j\|^2}{2\sigma^2}\right)$ for all $i \in \{1, \dots, n\}$ and $j \in \{1, \dots, m\}$

\State \textbf{Step 2: Calculate MMD Value}
\State Calculate the MMD squared:
\[
\text{MMD}^2 = \frac{1}{n^2} \sum_{i=1}^n \sum_{j=1}^n (K_{XX})_{i,j} + \frac{1}{m^2} \sum_{i=1}^m \sum_{j=1}^m (K_{YY})_{i,j} - \frac{2}{n m} \sum_{i=1}^n \sum_{j=1}^m (K_{XY})_{i,j}
\]

\State \textbf{Step 3: Return Result}
\State \Return $\text{MMD} = \sqrt{\text{MMD}^2}$

\end{algorithmic}
\end{algorithm}

\begin{algorithm}[H]
\caption{Calculating maximum mean discrepancy using random Fourier features}
\label{algo:RFFMMD}
\begin{algorithmic}[1]
\Require Two sample sets: $\{ x_i \}_{i=1}^n \sim p(x)$, $\{ y_j \}_{j=1}^m \sim q(y)$; feature dimension $D$; kernel function parameters (e.g., bandwidth of Gaussian kernel $\sigma$)
\Ensure approximate MMD value $\text{MMD}_{\text{RFF}}$
\State \textbf{Step 1: Generate random frequencies and offsets}
\State Sample $D$ random frequency vectors $\{ \omega_d \}_{d=1}^D$ independently from the spectral density $p(\omega)$ corresponding to the kernel function. For Gaussian kernel, $\omega_d\sim \mathcal{N}(0,\frac{1}{\sigma^2}\mathbf{I})$.
\State Sample $D$ random offsets $\{ b_d \}_{d=1}^D$ independently from a uniform distribution $\mathcal{U}(0, 2\pi)$
\State \textbf{Step 2: Compute random Fourier feature mapping}
\For{$i = 1$ to $n$}
    \State For the $i$th sample $x_i$, compute its random Fourier feature vector:
    \[
    z(x_i) = \sqrt{\frac{2}{D}} \left[ \cos(\omega_1^\top x_i + b_1), \cos(\omega_2^\top x_i + b_2), \dotsc, \cos(\omega_D^\top x_i + b_D) \right]
    \]
\EndFor
\For{$j = 1$ to $m$}
    \State For the $j$th sample $y_j$, compute its random Fourier feature vector:
    \[
    z(y_j) = \sqrt{\frac{2}{D}} \left[ \cos(\omega_1^\top y_j + b_1), \cos(\omega_2^\top y_j + b_2), \dotsc, \cos(\omega_D^\top y_j + b_D) \right]
    \]
\EndFor
\State \textbf{Step 3: Calculate the mean embedding}
\State Calculate the mean embedding of the two distributions:
\[
\hat{\mu}_p = \frac{1}{n} \sum_{i=1}^n z(x_i), \quad \hat{\mu}_q = \frac{1}{m} \sum_{j=1}^m z(y_j)
\]
\State \textbf{Step 4: Calculate MMD value}
\State Calculate the approximate MMD value:
\[
\text{MMD}_{\text{RFF}} = \left\| \hat{\mu}_p - \hat{\mu}_q \right\|_2
\]
\State \Return $\text{MMD}_{\text{RFF}}$
\end{algorithmic}
\end{algorithm}

\section{Distance Measures}

In general, a statistical distance should satisfy several properties as follows (\cite{munkres2000topology}):

\begin{definition}[Distance Function]
    Let $X$ be a set. A function $d:X\times X \longrightarrow \mathbb{R}$ is a metric, if for all $x, y, z \in X$, the function $d$ follows the following conditions:
\end{definition}
\begin{itemize}
    \item Non negativity: for all $x,y \in X$, the metric $d$ is non-negative:\[
    d(x,y)\geq 0
    \]
    \item Positive definiteness:
    \[
    d(x,y) = 0 \iff x=y
    \]
    \item Symmetry: for all $x,y\in X$, the metric is symmetric:
    \[
    d(x,y) = d(y,x)
    \]
    \item Subadditivity: for all $x,y,z\in X$, the metric $d$ should fulfill the triangle inequality, which states that
    \[
    d(x,z) \leq d(x,y)+d(y,z).
    \]
    
\end{itemize}

\subsection{Technical Details and Illustration of the Wasserstein Distance}
\label{subsec:WS}

We show here some basic properties of the sliced Wasserstein distance (SWD). The majority of our proofs follow the seminal work established in \cite{bonnotte2013unidimensional}, which provides a comprehensive theoretical foundation for optimal transportation methods. The following theoretical framework establishes the fundamental characteristics of SWD, with particular emphasis on its metric properties and convergence behavior in probability spaces.\\

\begin{theorem}[Metric Property]

The Sliced Wasserstein Distance $SWD_p$ defines a proper metric on the space of probability measures.

\end{theorem}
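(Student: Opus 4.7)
The plan is to verify the four axioms of a metric (non-negativity, symmetry, positive definiteness, triangle inequality) that were recalled earlier in the appendix. Three of the four are essentially inherited from the fact that $W_p$ is itself a metric on one-dimensional projections, so the work reduces to combining that fact with the integral over $\mathbb{S}^{d-1}$. The non-trivial axiom is positive definiteness, which requires showing that equality of all one-dimensional marginals implies equality of the $d$-dimensional measures. This is where I expect the real work to lie, and the natural tool is the Cramér--Wold device.

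First, non-negativity is immediate: for every $\theta \in \mathbb{S}^{d-1}$, the quantity $W_p(P_\theta \mu, P_\theta \nu)$ is non-negative, so the integrand $W_p^p(P_\theta\mu, P_\theta\nu)$ is non-negative, and the $1/p$-th power of a non-negative integral is non-negative. Symmetry is equally direct: since $W_p(P_\theta\mu, P_\theta\nu) = W_p(P_\theta\nu, P_\theta\mu)$ for every direction, the integrals agree and hence $\text{SWD}_p(\mu,\nu) = \text{SWD}_p(\nu,\mu)$.

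For the triangle inequality, let $\mu,\nu,\rho$ be probability measures on $\mathbb{R}^d$. For each fixed $\theta$, the triangle inequality for the one-dimensional Wasserstein distance yields $W_p(P_\theta\mu, P_\theta\nu) \le W_p(P_\theta\mu, P_\theta\rho) + W_p(P_\theta\rho, P_\theta\nu)$. Viewing the maps $\theta \mapsto W_p(P_\theta\mu, P_\theta\rho)$ and $\theta \mapsto W_p(P_\theta\rho, P_\theta\nu)$ as elements of $L^p(\mathbb{S}^{d-1}, d\theta)$ and applying Minkowski's inequality in this space then gives $\text{SWD}_p(\mu,\nu) \le \text{SWD}_p(\mu,\rho) + \text{SWD}_p(\rho,\nu)$. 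One should note that the uniform measure on $\mathbb{S}^{d-1}$ is implicit in the integral, and a brief remark that the integrands are measurable and $p$-integrable (e.g., under a finite $p$-th moment assumption) keeps the argument rigorous.

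The main obstacle is positive definiteness: $\text{SWD}_p(\mu,\nu) = 0 \Rightarrow \mu = \nu$. If $\text{SWD}_p(\mu,\nu) = 0$, then since the integrand is non-negative and (by standard arguments using continuity of the map $\theta \mapsto W_p(P_\theta\mu, P_\theta\nu)$) continuous in $\theta$, we conclude $W_p(P_\theta\mu, P_\theta\nu) = 0$ for every $\theta \in \mathbb{S}^{d-1}$. Because $W_p$ is a genuine metric on one-dimensional probability measures, this forces $P_\theta \mu = P_\theta \nu$ for every $\theta$. Now invoke the Cramér--Wold theorem: a Borel probability measure on $\mathbb{R}^d$ is uniquely determined by the collection of its one-dimensional projections onto all directions $\theta \in \mathbb{S}^{d-1}$. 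Equivalently, one can argue via characteristic functions by noting that $\widehat{\mu}(t\theta) = \widehat{P_\theta\mu}(t)$ for $t\in\mathbb{R}$, so agreement of all projected measures implies agreement of the characteristic functions on all of $\mathbb{R}^d$, hence $\mu = \nu$. The reverse implication ($\mu = \nu \Rightarrow \text{SWD}_p(\mu,\nu) = 0$) is immediate from the definition. Combining all four axioms establishes that $\text{SWD}_p$ is a proper metric on the space of probability measures with finite $p$-th moment.
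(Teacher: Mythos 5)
Your proposal is correct and follows essentially the same route as the paper: non-negativity and symmetry inherited from $W_p$, the triangle inequality by combining the pointwise $W_p$ triangle inequality with the integral over $\mathbb{S}^{d-1}$, and positive definiteness by deducing $P_\theta\mu = P_\theta\nu$ for all $\theta$ and then identifying $\mu=\nu$ via characteristic functions (your Cram\'er--Wold argument is the same as the paper's Fourier slice theorem step). If anything, your explicit appeal to Minkowski's inequality in $L^p(\mathbb{S}^{d-1})$ makes the triangle-inequality step more precise than the paper's brief reference to ``linearity of the integral.''
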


\begin{proof} To prove that $SWD_p$ is a metric, we need to verify four properties: non-negativity, symmetry, triangular inequality, and positive definiteness.

Non-negativity and symmetry: Since $W_p$ is non-negative and symmetric, and the integral maintains non-negativity and symmetry, $SWD_p$ naturally satisfies these two properties.

Triangle inequality: Using the triangle inequality for $W_p$ and the linearity of the integral, it can be shown that $SWD_p$ satisfies the triangle inequality.

Positive definiteness: the key is to show that $SWD_p(\mu, \nu) = 0$ implies $\mu = \nu$. Assume $SWD_p(\mu, \nu) = 0$, then for every $\theta\in \mathbb{S}^{d-1}$, we have 
\[
W_p (\theta \mu,\theta \nu) = 0.
\]
This means that for all $\theta$, the projective distributions of $\mu$ and $\nu$ in the direction $\theta$ are the same.

Using the Fourier slice theorem, the Fourier transforms of $\mu$ and $\nu$ are the same in all directions:

\begin{equation}
    \mathcal{F}\mu(s\theta) = \mathcal{F}\left( \theta\mu \right)(s) = \mathcal{F}\left( \theta\nu \right)(s) = \mathcal{F}\nu(s\theta).
\end{equation}

Due to the invertibility of the Fourier transform on $\mathbb{R}^d$, this shows that $\mu = \nu$.
\end{proof}

\begin{lemma}[Equivalence Bounds \cite{bonnotte2013unidimensional}]
For $\mu,\nu \in \mathcal{P}p(\mathbb{R}^d)$, there exists a constant $c_{d,p}$ such that:

\begin{equation}
SWD_p(\mu,\nu)^p \leq c_{d,p}W_p(\mu,\nu)^p,
\end{equation}

where $c_{d,p} = \frac{1}{d}\int_{S^{d-1}} |\theta|_p^p d\theta \leq 1$.
\end{lemma}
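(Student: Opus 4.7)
The plan is to use a coupling-based argument: first establish the bound for an arbitrary coupling of $\mu$ and $\nu$, then take the infimum. The crucial observation is that if $\gamma \in \Gamma(\mu,\nu)$, then for each fixed direction $\theta \in \mathbb{S}^{d-1}$ the pushforward $(P_\theta \otimes P_\theta)_* \gamma$ is a valid coupling between the one-dimensional projections $P_\theta \mu$ and $P_\theta \nu$. This immediately produces the pointwise-in-$\theta$ bound
\begin{equation*}
W_p^p(P_\theta \mu, P_\theta \nu) \;\leq\; \int |\theta^\top x - \theta^\top y|^p \, d\gamma(x,y) \;=\; \int |\theta^\top (x-y)|^p \, d\gamma(x,y).
\end{equation*}

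Next, I would integrate this inequality over $\theta \in \mathbb{S}^{d-1}$ and swap the order of integration via Fubini (both sides are non-negative), obtaining
\begin{equation*}
SWD_p(\mu,\nu)^p \;\leq\; \int \left( \int_{\mathbb{S}^{d-1}} |\theta^\top(x-y)|^p \, d\theta \right) d\gamma(x,y).
\end{equation*}
The inner integral is then evaluated using rotational invariance of the surface measure: setting $z = x-y$ and rotating so that $z/|z|_2$ becomes $e_1$, the integrand becomes $|z|_2^p |\theta_1|^p$, so $\int_{\mathbb{S}^{d-1}} |\theta^\top z|^p \, d\theta = |z|_2^p \int_{\mathbb{S}^{d-1}} |\theta_1|^p \, d\theta$. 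By the permutation symmetry of coordinates on the sphere, $\int |\theta_i|^p \, d\theta$ is the same for every $i$, so averaging gives
\begin{equation*}
\int_{\mathbb{S}^{d-1}} |\theta_1|^p \, d\theta \;=\; \frac{1}{d} \sum_{i=1}^d \int_{\mathbb{S}^{d-1}} |\theta_i|^p \, d\theta \;=\; \frac{1}{d}\int_{\mathbb{S}^{d-1}} |\theta|_p^p \, d\theta \;=\; c_{d,p},
\end{equation*}
and the bound $c_{d,p} \leq 1$ follows since $|\theta_1| \leq 1$ on the unit sphere.

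The final step is to take the infimum of the right-hand side over all couplings $\gamma \in \Gamma(\mu,\nu)$, which by the Kantorovich definition of $W_p$ yields $SWD_p(\mu,\nu)^p \leq c_{d,p} W_p(\mu,\nu)^p$, as required. The main obstacle, and the only step that is not pure bookkeeping, is the explicit evaluation of the inner integral and recognizing that the resulting rotationally invariant constant can be rewritten in the symmetric form $\frac{1}{d}\int_{\mathbb{S}^{d-1}} |\theta|_p^p \, d\theta$; everything else is Fubini, the definition of the projected Wasserstein distance, and the standard identification of the pushforward of a coupling as a coupling of the pushforwards.
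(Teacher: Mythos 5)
Your argument is correct and is essentially the standard proof of this bound: the paper itself does not prove the lemma (it defers to the cited reference \cite{bonnotte2013unidimensional}), and the argument there is exactly your chain of pushing a coupling forward through $P_\theta$, applying Fubini, and evaluating the spherical integral by rotational invariance before taking the infimum over couplings. The only point worth making explicit is that $d\theta$ must be the \emph{normalized} uniform (probability) measure on $\mathbb{S}^{d-1}$, both in the definition of $SWD_p$ and in $c_{d,p}$; with the unnormalized surface measure the final bound $c_{d,p}\leq 1$ (which you deduce from $|\theta_1|\leq 1$) would fail, so your proof implicitly uses, and should state, this convention.
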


Further, regarding the convergence rate, \cite{bonnotte2013unidimensional} shown the following Lemma. 

\begin{lemma}[Convergence Rate]
For measures supported in $B(0,R)$, there exists a constant $C_d > 0$ such that:
\begin{equation}
W_1(\mu,\nu) \leq C_dR^{d/(d+1)} SW_1(\mu,\nu)^{1/(d+1)}.
\end{equation}
\end{lemma}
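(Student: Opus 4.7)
The plan is to follow Bonnotte's classical Fourier-based argument~\cite{bonnotte2013unidimensional}: regularize both measures at a common scale $\epsilon$, convert the Euclidean $L^2$ distance of the smoothed densities into a weighted integral over the unit sphere via the Fourier slice theorem, and then optimize $\epsilon$ against the support radius $R$. The starting step is mollification. Fix a smooth, radial probability density $\rho$ supported in the unit ball and set $\rho_\epsilon(x)=\epsilon^{-d}\rho(x/\epsilon)$. With $\mu_\epsilon=\mu*\rho_\epsilon$ and $\nu_\epsilon=\nu*\rho_\epsilon$, the trivial coupling that transports each point by $\rho_\epsilon$ gives $W_1(\mu,\mu_\epsilon)\leq\epsilon$ and similarly for $\nu$, so by the triangle inequality
\[
W_1(\mu,\nu)\leq 2\epsilon+W_1(\mu_\epsilon,\nu_\epsilon).
\]

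Next, I would convert the smoothed distance into an $L^2$ quantity and then into a directional integral. By Kantorovich--Rubinstein duality and the fact that $\mu_\epsilon-\nu_\epsilon$ is supported in $B(0,2R)$, normalising a $1$-Lipschitz test function to vanish at the origin gives $|f|\leq 2R$ there, and Cauchy--Schwarz yields $W_1(\mu_\epsilon,\nu_\epsilon)\leq C\,R^{(d+2)/2}\,\|\mu_\epsilon-\nu_\epsilon\|_{L^2(\mathbb{R}^d)}$. Plancherel in polar coordinates, combined with the Fourier slice identity $\widehat{\mu-\nu}(r\theta)=\widehat{P_\theta(\mu-\nu)}(r)$ already used in the proof of the Metric Property theorem, then gives
\[
\|\mu_\epsilon-\nu_\epsilon\|_{L^2}^2=c_d\int_{S^{d-1}}\!\int_0^\infty |\widehat{P_\theta(\mu-\nu)}(r)|^2\,|\hat\rho(\epsilon r)|^2\,r^{d-1}\,dr\,d\sigma(\theta).
\]
To reduce each directional integral to the one-dimensional Wasserstein distance, I would set $g_\theta=F_{P_\theta\mu}-F_{P_\theta\nu}$, note $\widehat{P_\theta(\mu-\nu)}(r)=ir\,\hat g_\theta(r)$ by integration by parts, and use Vallender's identity $\|g_\theta\|_{L^1}=W_1(P_\theta\mu,P_\theta\nu)$ together with $\|g_\theta\|_{L^\infty}\leq 1$, which gives the interpolation $\|g_\theta\|_{L^2}^2\leq W_1(P_\theta\mu,P_\theta\nu)$. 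Splitting the radial integral at $r=\epsilon^{-1}$ and exploiting the rapid decay of $\hat\rho$ produces
\[
\|\mu_\epsilon-\nu_\epsilon\|_{L^2}^2\leq C_d\,\epsilon^{-(d+1)}\,SW_1(\mu,\nu).
\]

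Chaining these estimates gives $W_1(\mu,\nu)\leq 2\epsilon+C\,R^{(d+2)/2}\,\epsilon^{-(d+1)/2}\,SW_1(\mu,\nu)^{1/2}$, and balancing the two terms yields an inequality of the form $W_1(\mu,\nu)\leq C_d\,R^{a}\,SW_1(\mu,\nu)^{b}$. The structural ingredients -- mollification, duality, the slice theorem, and Vallender's identity -- are entirely routine; the hard part is obtaining \emph{exactly} the exponents $a=d/(d+1)$ and $b=1/(d+1)$ claimed in the lemma. The naive bookkeeping above loses one power of $R$ in the Cauchy--Schwarz step: to recover the sharp rate, one regularises the $1$-Lipschitz test function $f$ by the same mollifier, decomposes $f=f*\rho_\epsilon+(f-f*\rho_\epsilon)$, and treats the low-frequency piece through a Sobolev-type pairing at frequency $\epsilon^{-1}$ rather than through the crude $L^2$ bound. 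This refinement -- carried out in detail in~\cite{bonnotte2013unidimensional} -- is the step I expect to require the most technical care, since it is what converts the suboptimal $R^{(d+2)/(d+3)}SW_1^{1/(d+3)}$ obtained by the bare argument into the sharp $R^{d/(d+1)}SW_1^{1/(d+1)}$ stated in the lemma.
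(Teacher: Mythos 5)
There is a genuine gap, and you have in fact flagged it yourself: the argument you actually carry out (mollification, Kantorovich--Rubinstein duality plus Cauchy--Schwarz, Plancherel with the Fourier slice identity, and the interpolation $\|g_\theta\|_{L^2}^2\leq\|g_\theta\|_{L^\infty}\|g_\theta\|_{L^1}\leq W_1(P_\theta\mu,P_\theta\nu)$) only yields, after balancing $\epsilon$, the weaker bound $W_1(\mu,\nu)\leq C_d R^{(d+2)/(d+3)}\,SW_1(\mu,\nu)^{1/(d+3)}$. The step that would upgrade this to the stated exponents $d/(d+1)$ and $1/(d+1)$ -- the ``Sobolev-type pairing'' refinement of the test function -- is only asserted to exist in the reference, not executed. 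Since the sharp rate $1/(d+1)$ \emph{is} the content of the lemma, this is not a routine detail but the missing core of the proof. Note also that the paper itself does not prove this lemma; it defers entirely to \cite{bonnotte2013unidimensional}, so your proposal cannot be checked against an in-paper argument, only against that source, and relative to it your sketch stops short of the decisive estimate.

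Two remarks that would close or simplify the gap. First, the $R$-dependence is pure scaling and need not be tracked at all: under the dilation $x\mapsto Rx$ both $W_1$ and $SW_1$ are $1$-homogeneous, so the unit-ball inequality $W_1\leq C_d\,SW_1^{1/(d+1)}$ immediately gives $W_1\leq C_d R^{d/(d+1)}SW_1^{1/(d+1)}$ on $B(0,R)$; your worry about ``losing a power of $R$'' dissolves once you reduce to $R=1$ first. Second, Bonnotte's route to the sharp exponent is not the $L^2$/Plancherel interpolation you set up but a direct estimate on the mollified measures, roughly $W_1(\mu*\rho_\lambda,\nu*\rho_\lambda)\leq C_d\,\lambda^{-d}\,SW_1(\mu,\nu)$, obtained from the dual formulation by expressing the smoothed $1$-Lipschitz test function through the projections (Radon-transform inversion), each directional contribution being controlled by $W_1(P_\theta\mu,P_\theta\nu)$. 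Combining this with $W_1(\mu,\mu*\rho_\lambda)\leq\lambda$ and choosing $\lambda\sim SW_1^{1/(d+1)}$ gives exactly the claimed rate. If you want a self-contained proof, that $\lambda^{-d}$ estimate is the lemma you must prove; as written, your proposal establishes only the $1/(d+3)$ rate and cites away the rest.
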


For detailed proofs of Lemma 1 and Lemma 2, see \cite{bonnotte2013unidimensional}.\\

Next, we give some illustrations of how the projection process of the sliced Wasserstein distance works. 

\begin{figure}[h]
    \centering
    \includegraphics[width=\textwidth]{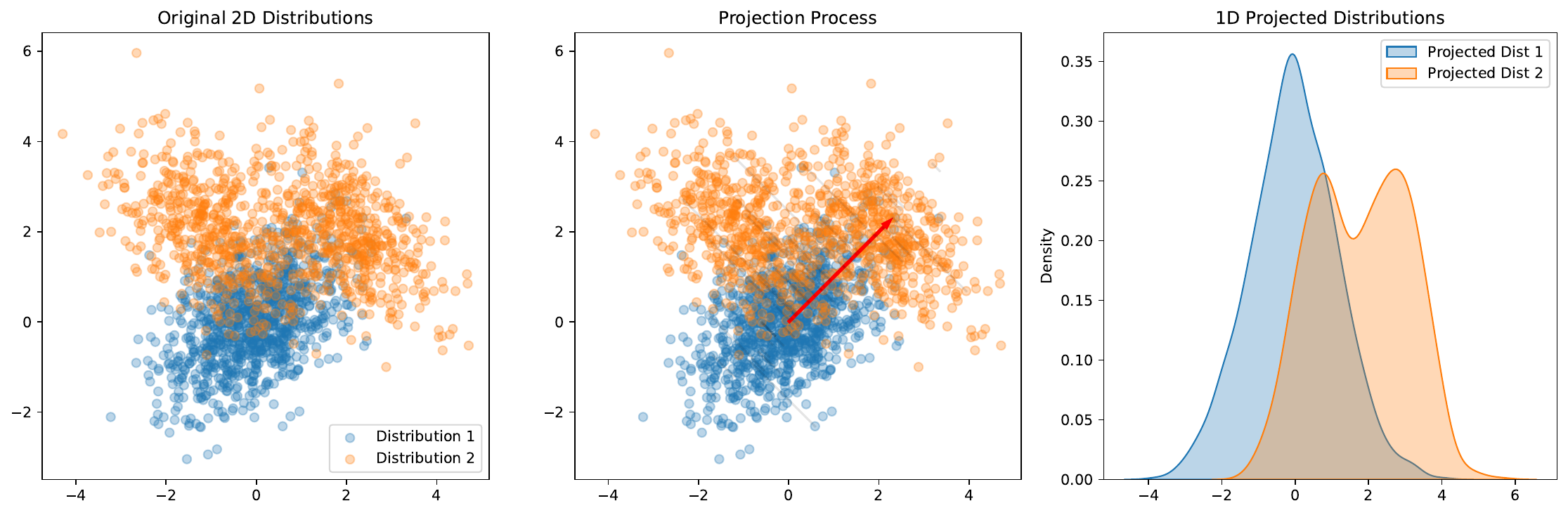}
    \caption{Illustration of the sliced Wasserstein distance computation process with 2D Gaussian distributions.}
    \label{fig:sw_illustration}
\end{figure}

 The left panel shows two original distributions in 2D space: a single Gaussian distribution (blue points) and a mixture of two Gaussians (orange points). The key insight of SWD is to project these high-dimensional distributions onto one-dimensional spaces. In the specific illustration here, we show a two-dimensional case, as shown by the red projection direction. The middle panel demonstrates this projection process, where each point is mapped onto the chosen direction. The right panel reveals the resulting one-dimensional probability densities after projection, effectively transforming our original distributional comparison problem into a simpler one-dimensional optimal transport problem.

\begin{figure}[h]
    \centering
    \includegraphics[width=\textwidth]{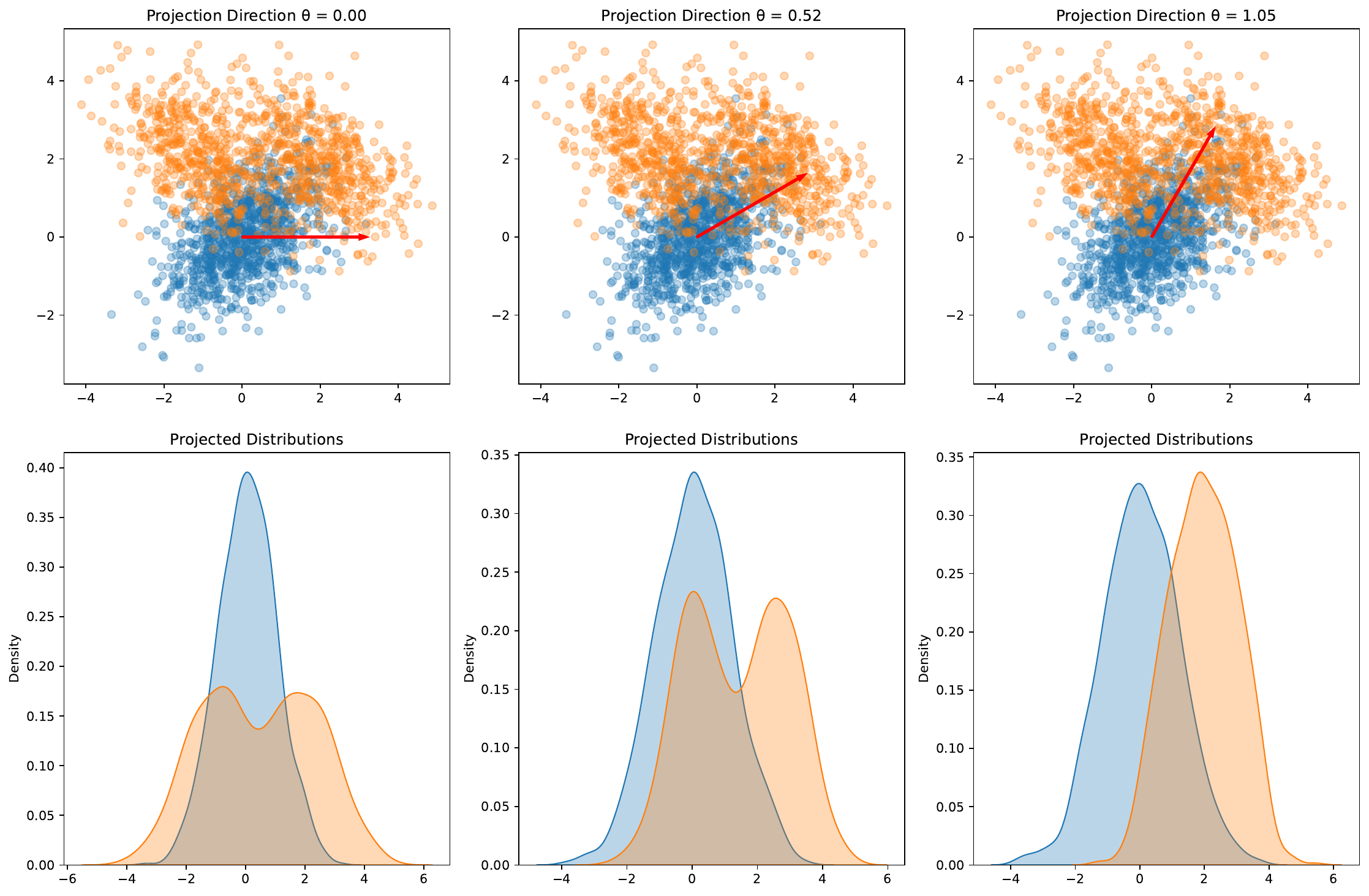}
    \caption{Effect of different projection angles ($\theta = 0, \pi/6, \pi/3$) on sliced Wasserstein distance computation. These directions are for visual illustrations only.}
    \label{fig:multiple_projections}
\end{figure}

The effectiveness of the SWD relies on using multiple random projections to capture different aspects of the distributional differences. As illustrated in Figure~\ref{fig:multiple_projections}, different projection angles reveal distinct characteristics of the distributions. When $\theta = 0$, the projection highlights the horizontal spread of the distributions, while larger angles progressively capture the vertical structural differences. This demonstrates why using multiple random projections is crucial for accurately computing the SWD, as each projection angle may capture different aspects of the distributional discrepancy.

\begin{figure}[t]
    \centering
    \begin{minipage}{0.48\textwidth}
        \centering
        \includegraphics[width=\textwidth]{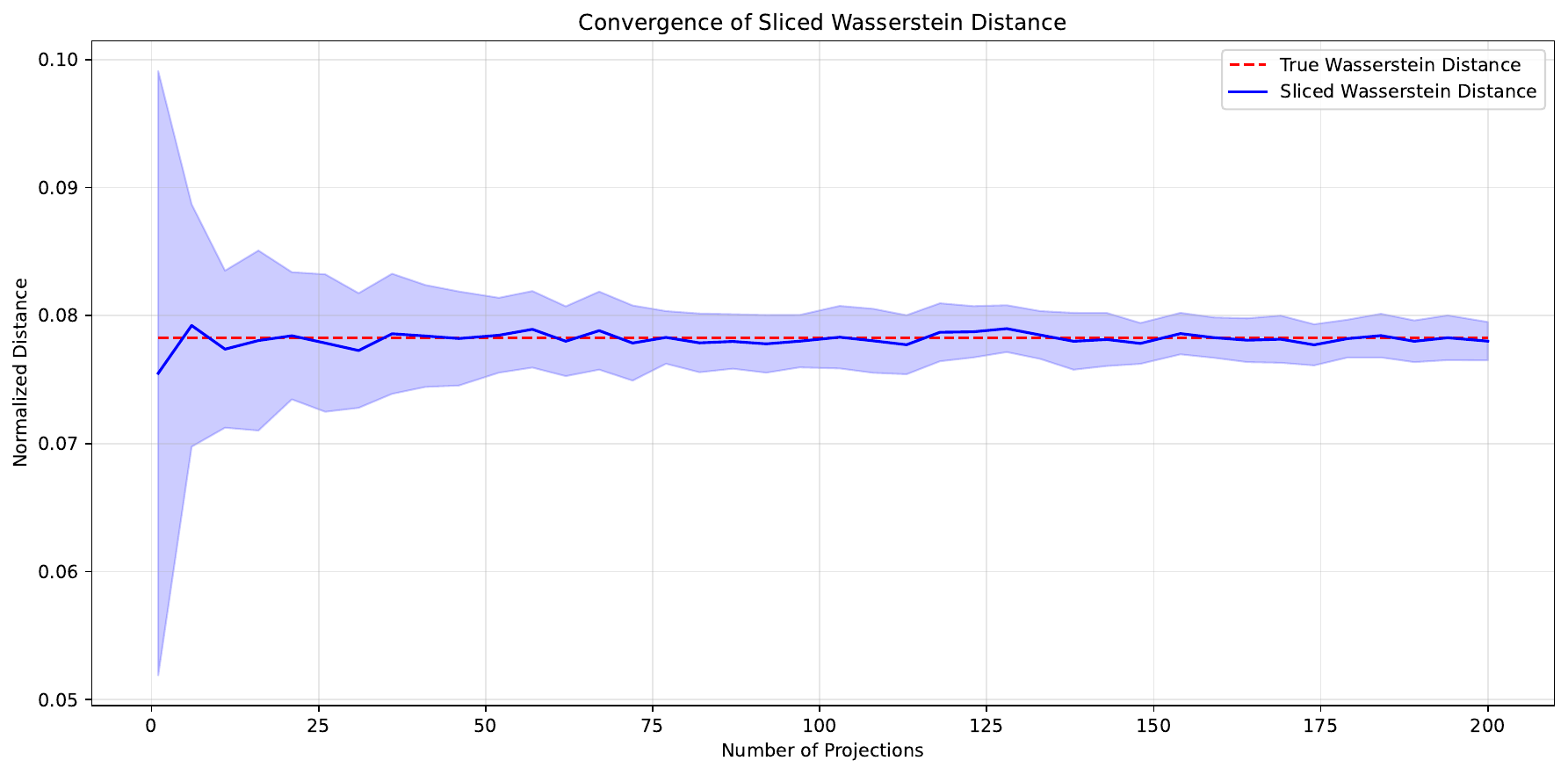}
        \caption*{(a) Convergence behavior of SWD}
    \end{minipage}
    \hfill 
    \begin{minipage}{0.48\textwidth}
        \centering
        \includegraphics[width=\textwidth]{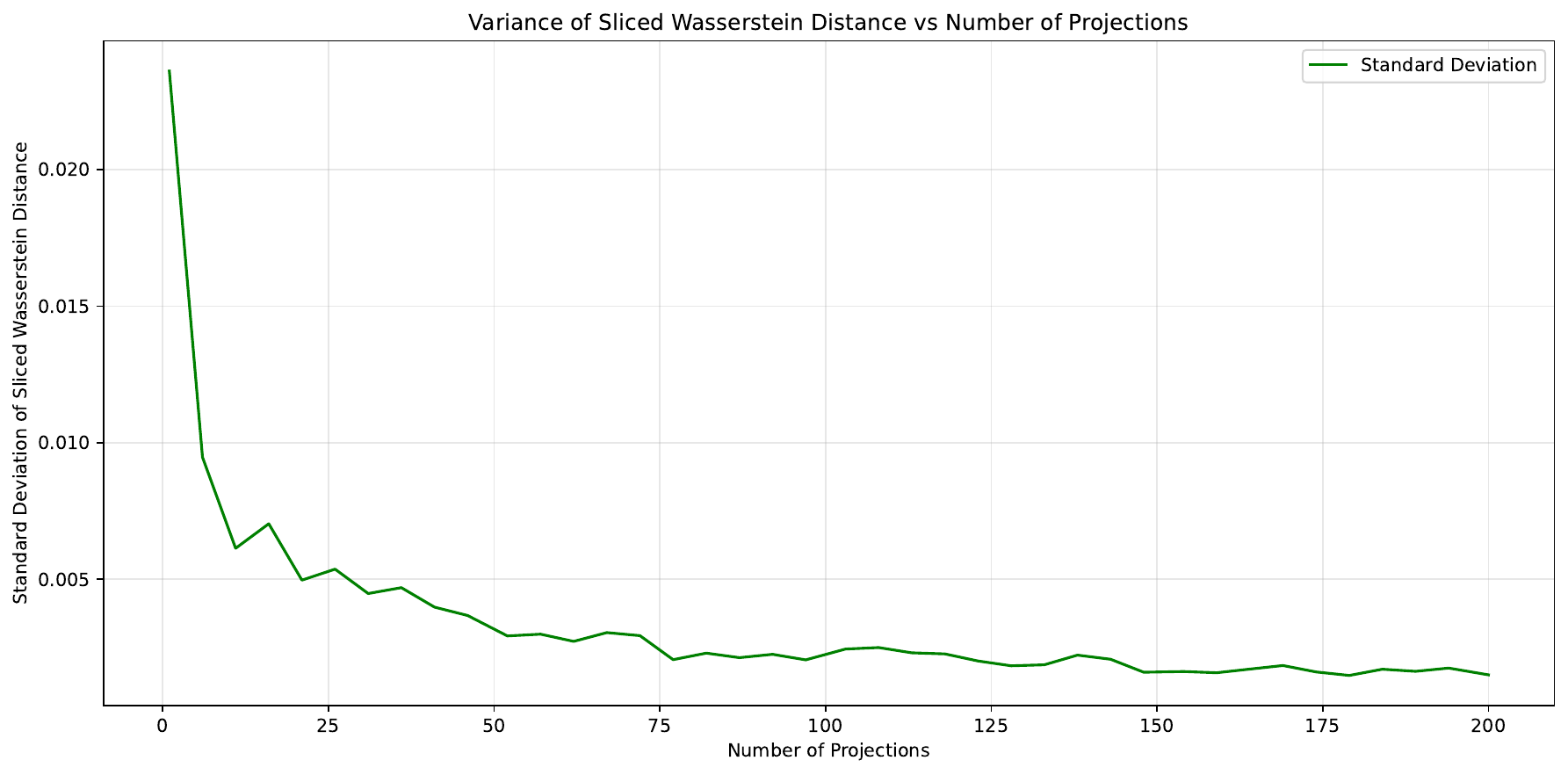}
        \caption*{(b) Variance reduction analysis}
    \end{minipage}
    
    \caption{Empirical analysis of sliced Wasserstein distance: (a) shows the convergence to true Wasserstein distance as the number of projections increases, with the blue shaded area representing one standard deviation; (b) demonstrates the decreasing trend of standard deviation with more projections, indicating improved stability of the estimation.}
    \label{fig:sw_convergence}
\end{figure}

Figure~\ref{fig:sw_convergence} provides an empirical analysis of the SWD. The convergence plot (a) demonstrates that the SWD rapidly approaches the true Wasserstein distance within the first 25 projections and stabilizes thereafter. The variance analysis (b) confirms this observation by showing a significant reduction in standard deviation as the number of projections increases. These results suggest that approximately 50 projections are sufficient for obtaining a reliable approximation while maintaining the computational efficiency of $O(Ln \log n)$, a substantial improvement over the $O(n^3 \log n)$ complexity of the exact Wasserstein computation (\cite{xie2020fast, le2024fast}). 

\subsection{Technical Details and Illustration of the Maximum Mean Discrepancy}
\label{subsec:MMD}

The maximum mean discrepancy (MMD) was first introduced by \cite{gretton2006kernel}, in which the authors introduced MMD as a measure to compare two distributions. It was then introduced in \cite{gretton2012kernel}, which includes detailed theoretical introduction and its statistical properties. 

In reproducing kernel Hilbert space (RKHS), the computation of MMD can actually be performed directly via the kernel function without explicitly computing the embedding in the high-dimensional feature space. The essence of the kernel trick is to implicitly compute in high-dimensional feature spaces by defining the kernel function $k(x,y)$ without having to explicitly compute the feature map $\phi(x)$. This allows us to compute efficiently in high-dimensional or even infinite-dimensional feature spaces.

It can be shown that the computational complexity of the kernel MMD is $O(n^2d)$ for the Gaussian kernel, where $n$ is the sample size and $d$ is the dimension of the data for two same sample size $n$ data. Since it is needed to calculate the pairwise distance of the kernel function $n^2$ times, the total time complexity is $O(n^2)\times O(d) = O(n^2d)$. In our empirical analysis, the memory requirements for MMD computation pose significant constraints, particularly when processing high-dimensional samples. When dealing with samples of dimension $d=100$, the computation becomes computationally intractable as $n$ exceeds $10,000$ samples, primarily due to the quadratic memory complexity O($n^2$) required for storing the kernel matrix.

Therefore, some approximation methods that can be used for fast computation and with lower complexity have been proposed.

For the Laplacian kernel, \cite{bodenham2023eummd} presents an efficient computational method, \texttt{euMMD}, for the computation of MMD statistics on one-dimensional data, reducing the complexity from $O(n^2)$ to $O(n \log n)$. In the high-dimensional case, the authors further borrowed the idea of the slicing idea of the SWD to map the high-dimensional data to one-dimensional by random projection, which is utilized for the approximation computation so as to maintain the computational efficiency.\\

\textbf{Random Fourier Features for MMD approximation}

For the Gaussian kernel, a popular approach is to use the Random Fourier Features (RFF) to approximate the kernel calculation. The theoretical justification for RFF comes from Bochner's theorem (\cite{bochner1959lectures}), which can be summarized as follows:\\

\begin{theorem}{Bochner's Theorem}

A continuous, shift-invariant kernel function $ k(\delta) = k(x - y) $ is positive definite if and only if it is the Fourier transform of some non-negative finite Borel measure $ \mu $, i.e.:
\[
k(\delta) = \int_{\mathbb{R}^d} e^{i \omega^\top \delta} \, d\mu(\omega),
\]
where $ \delta = x - y $, $ \omega \in \mathbb{R}^d $, and $ i $ are imaginary units.
\end{theorem}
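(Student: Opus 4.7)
The plan is to prove the two implications separately. The backward implication (that the Fourier transform of a non-negative finite Borel measure is positive definite) reduces to interchanging a finite sum with an integral against $\mu$. The forward implication (that every continuous, shift-invariant, positive definite kernel $k$ arises in this way) is the substantive content and will be obtained via the Riesz--Markov representation theorem applied to a suitable positive linear functional.

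For the backward direction, I would substitute $k(\delta) = \int e^{i\omega^\top \delta} d\mu(\omega)$ directly into the defining quadratic form of positive definiteness. Interchanging the finite sum and the integral yields
\begin{equation*}
\sum_{j,l=1}^n c_j \bar{c}_l \, k(x_j - x_l) = \int_{\mathbb{R}^d} \Big| \sum_{j=1}^n c_j e^{i\omega^\top x_j} \Big|^2 d\mu(\omega) \geq 0,
\end{equation*}
since the integrand is a pointwise non-negative squared modulus and $\mu$ is a non-negative measure. This settles positive definiteness.

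For the forward direction, I would define a linear functional $L$ on the Schwartz space $\mathcal{S}(\mathbb{R}^d)$ by $L(\varphi) = \int k(x) \widehat{\varphi}(x) dx$, where $\widehat{\varphi}$ is the Fourier transform of $\varphi$. The central step is to prove that $L$ is a \emph{positive} functional: given $\varphi \geq 0$, one writes $\varphi = |\psi|^2$ for some $\psi \in \mathcal{S}(\mathbb{R}^d)$ and, via a Plancherel-type manipulation, rewrites $L(\varphi)$ as the double integral $\iint k(x-y) \psi(x) \overline{\psi(y)} dx dy$. Approximating this by Riemann sums produces exactly the quadratic forms $\sum_{j,l} c_j \bar{c}_l k(x_j - x_l)$, which are non-negative by hypothesis, and the non-negativity passes to the limit. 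Extending $L$ continuously to $C_0(\mathbb{R}^d)$ and invoking Riesz--Markov then yields a non-negative finite Borel measure $\mu$ representing $L$. Unwinding the definition of $L$ and applying Fourier inversion gives $k(\delta) = \int e^{i\omega^\top \delta} d\mu(\omega)$, and the total mass is $\mu(\mathbb{R}^d) = k(0) < \infty$ by continuity of $k$.

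The main obstacle will be the integrability technicalities in the forward direction: a continuous positive definite $k$ is only known a priori to be bounded (with $|k(x)| \leq k(0)$) and need not lie in $L^1(\mathbb{R}^d)$, so the integral defining $L$ may fail to converge. The standard remedy is a Gaussian regularization: replace $k$ by $k_\varepsilon(x) = k(x) e^{-\varepsilon |x|^2}$, which remains continuous and positive definite (as a product of two such kernels) while now belonging to $L^1 \cap L^2$. Carrying out the Riesz--Markov construction for each $\varepsilon > 0$ produces a family of non-negative finite Borel measures $\mu_\varepsilon$ with the uniform bound $\mu_\varepsilon(\mathbb{R}^d) \leq k(0)$, and a Helly-type weak compactness argument extracts a subsequential limit $\mu$. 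Continuity of $k$ at the origin is the ingredient that guarantees the limit measure indeed represents $k$, closing the argument.
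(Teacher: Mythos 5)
The paper does not prove this statement at all: Bochner's theorem is quoted as a classical result and attributed to Ref.~\cite{bochner1959lectures}, serving only as the justification for the Random Fourier Feature approximation of the Gaussian-kernel MMD. So there is no in-paper proof to compare against; your proposal must be judged on its own terms, and on those terms it is the standard textbook argument and is essentially sound. The backward direction is exactly right: pulling the finite quadratic form under the integral gives $\int_{\mathbb{R}^d}\bigl|\sum_j c_j e^{i\omega^\top x_j}\bigr|^2\,d\mu(\omega)\ge 0$. For the forward direction, your plan (Gaussian regularization $k_\varepsilon(x)=k(x)e^{-\varepsilon\|x\|^2}$, which stays positive definite since the Gaussian factor is itself the Fourier transform of a non-negative measure and products of positive definite functions are positive definite; Riesz--Markov for each $\varepsilon$; the uniform mass bound $\mu_\varepsilon(\mathbb{R}^d)\le k(0)$; weak-$*$ compactness; continuity of $k$ at $0$ to rule out mass escaping to infinity) is the classical route. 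One technical step as written would not survive scrutiny: you cannot in general factor a non-negative Schwartz function as $\varphi=|\psi|^2$ with $\psi$ Schwartz, since $\sqrt{\varphi}$ need not be smooth where $\varphi$ vanishes. The standard repair is to avoid the factorization altogether: establish positivity of $L$ only on functions of the form $\widehat{|\psi|^2}$, i.e.\ show directly that $\iint k_\varepsilon(x-y)\psi(x)\overline{\psi(y)}\,dx\,dy\ge 0$ for continuous compactly supported $\psi$ via Riemann sums (as you indicate), and deduce that the pointwise Fourier transform $\widehat{k_\varepsilon}(\omega)$ is non-negative by testing with modulated approximate identities; Fourier inversion at $\delta=0$ then gives $\int\widehat{k_\varepsilon}\le k(0)$ and the rest of your compactness argument goes through. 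With that adjustment the proof is correct, and it is in fact more informative than the paper, which offers only the citation.
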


Bochner's theorem shows that any continuous, shift-invariant, positive definite kernel function can be expressed as the Fourier transform of a non-negative finite Borel measure. Based on Bochner's theorem, we can represent the shift-invariant kernel function as a Fourier transform form. For example, for the Gaussian kernel function we have

\[
k(x, y) = \exp\left( -\frac{\| x - y \|^2}{2\sigma^2} \right).
\]

It can be expressed as:

\[
k(\delta) = \int_{\mathbb{R}^d} e^{i \omega^\top \delta} p(\omega) \, d\omega,
\]

where $p(\omega)$ is the Fourier transform of a Gaussian kernel $p(\omega) = \mathcal{N}(\omega; 0, 2\sigma^{-2} \mathbf{I})$ and let
\[
p(\omega) = (2\pi\sigma^{-2})^{-d/2}\exp(-\|\omega\|^2(\sigma^2/2)).
\]

RFF approximates the frequency $\{\omega_i\}$ by randomly sampling the frequency from the distribution $p(\omega)$ for $\omega$, thus approximating the otherwise infinite-dimensional integral representation to a finite-dimensional inner product representation.

In order to convert the complex exponential representation to the cosine representation, we sample a shift $\{b_i\}$ from the uniform distribution $\mathcal{U}(0,2\pi)$, which can be verified by:
\[
\mathbb{E}_b[\cos(\omega^Tx+b)\cos(\omega^Ty+b)] = \frac{1}{2}\cos(\omega^T(x-y)).
\]

By taking the real part and using Euler's formula $ e^{i \theta} = \cos \theta + i \sin \theta$, we have:

\[
k(x, y) = \int_{\mathbb{R}^d} 2 \cos\left( \omega^\top x + b \right) \cos\left( \omega^\top y + b \right) p(\omega) \, d\omega.
\]

Using the Monte Carlo integral, we have 
\[
k(x,y) \approx \frac{1}{D}\sum_{i=1}^D 2\cos(\omega^T+b_i).
\]

Finally, define the mapping $z:\mathbb{R}^d \rightarrow \mathbb{R}^D:$
\[
    z(x) = \sqrt{\frac{2}{D}} \left[ \cos(\omega_1^\top x + b_1), \cos(\omega_2^\top x + b_2), \dots, \cos(\omega_D^\top x + b_D) \right]^\top.
\]

With the above mapping, the kernel function can be approximated as:
\[
k(x, y) = \langle z(x), z(y) \rangle,
\]
where $ \langle \cdot, \cdot \rangle $ denotes the Euclidean inner product. In this way, the computation of the Gaussian kernel is transformed from the explicit $\exp(\frac{\|x-y\|^2}{2\sigma^2})$ to the computation of the inner product of two $D$-dimensional feature vectors.

To demonstrate the effectiveness and computational efficiency of MMD and RFF MMD in distributional difference measures, we designed a small illustration comparison experiment. 

\begin{figure}[h]
    \centering
    \includegraphics[width=\textwidth]{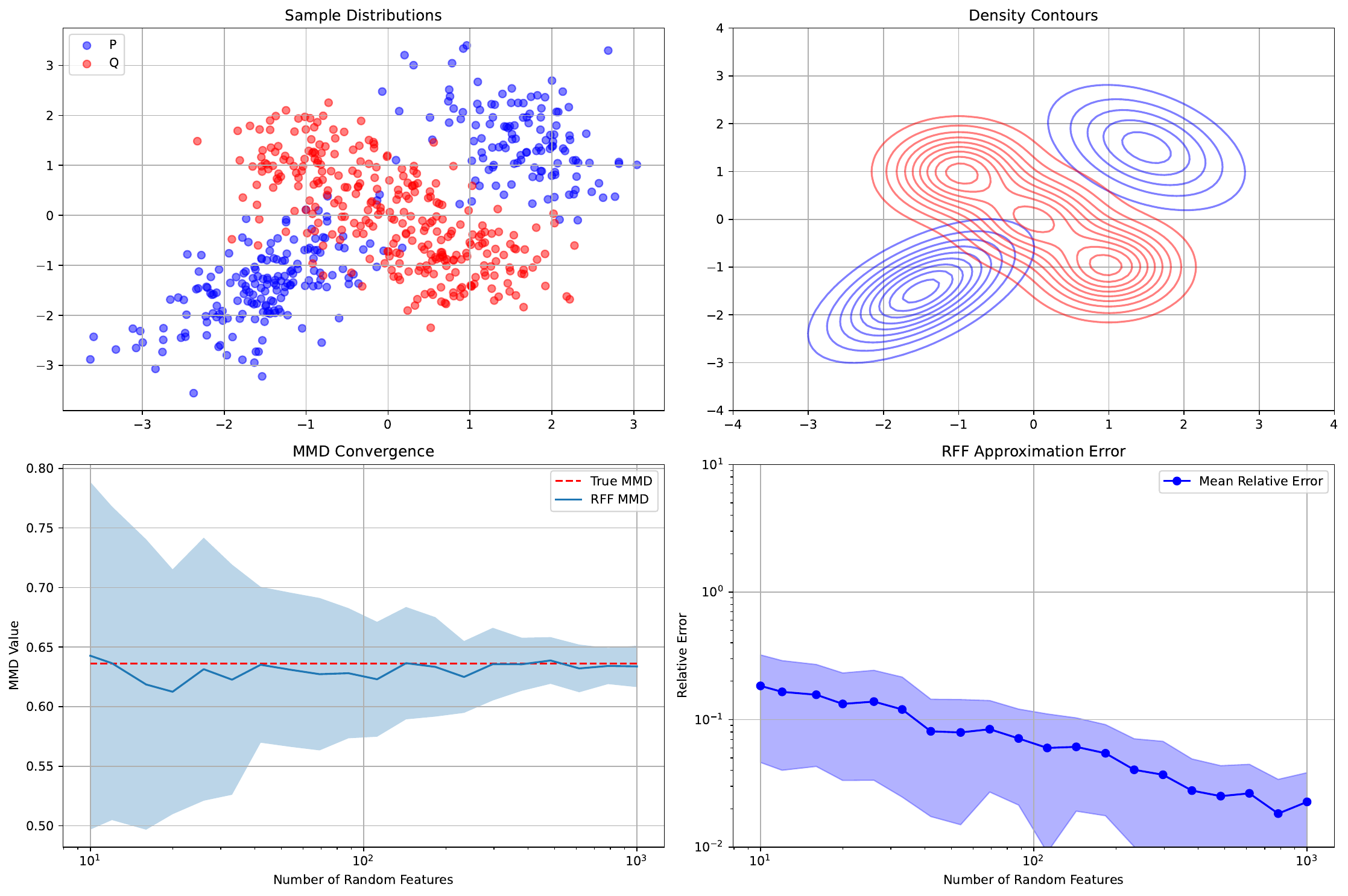}
    \caption{Illustration of the MMD and RFF-MMD comparison using two multi-modal Gaussian distributions.}
    \label{fig:MMD_illustration}
\end{figure}

As shown in the Figure \ref{fig:MMD_illustration}, the experiments use different multi-modal distributions $P$ and $Q$ that have clear spatial separation characteristics in the two-dimensional plane. The results show that the RFF MMD estimates exhibit good convergence properties as the random feature dimension $D$ is increased from 10 to 1000, with the relative error decreasing monotonically from 0.2 to 0.03. In particular, the confidence interval continues to narrow after $D > 100$, although the rate of error reduction tends to slow down, confirming that the RFF method is able to provide a reliable approximation of the MMD while maintaining computational efficiency.
\end{appendices}

\end{document}